\makeatletter \@addtoreset{equation}{section}
\newtheorem{thm}{Theorem}[section]
\newtheorem{cor}{Corollary}[section]
\newtheorem{lem}{Lemma}[section]
\newtheorem{pro}{Proposition}[section]
\theoremstyle{definition}
\newtheorem{rem}{Remark}[section]
\begin{document}

\title{On properties of a deformed Freud weight}
\author[1]{{Mengkun Zhu}\footnote{Zhu\_mengkun@163.com; zhu.mengkun@connect.umac.mo}}
\author[1]{{Yang Chen}\footnote{yangbrookchen@yahoo.co.uk}}
\affil[1]{Department of Mathematics, University of Macau,
Avenida da Universidade, Taipa, Macau, China}

\renewcommand\Authands{ and }

\maketitle

\begin{abstract}
We study the recurrence coefficients of the monic polynomials $P_n(z)$ orthogonal with respect to the deformed (also called semi-classical) Freud weight
\begin{equation*}
w_{\alpha}(x;s,N)=|x|^{\alpha}{\rm e}^{-N\left[x^{2}+s\left(x^{4}-x^{2}\right)\right]},
~~x\in\mathbb{R},
\end{equation*}
with parameters $\alpha>-1,~N>0,~s\in[0,1]$. We show that the recurrence coefficients $\beta_{n}(s)$ satisfy the first discrete Painlev\'{e} equation (denoted by d${\rm P_{I}}$), a differential-difference equation and a second order nolinear ordinary differential equation (ODE) in $s$. Here $n$ is the order of the Hankel matrix generated by $w_{\alpha}(x;s,N)$. We describe the asymptotic behavior of the  recurrence coefficients in three situations, (i) $s\rightarrow0$, $n,N$ finite, (ii) $n\rightarrow\infty$, $N$ finite, (iii) $n, N\rightarrow\infty$, such that the radio $r:=\frac{n}{N}$ is bounded away from $0$ and closed to $1$. We also investigate the existence and uniqueness for the positive solutions of the d${\rm P_{I}}$.

Further more, we derive, using the ladder approach, a second order linear ODE satisfied by the polynomials $P_n(z)$. It is found as $n\rightarrow\infty$, the linear ODE turns to be a biconfluent Heun equation. This paper concludes with the study of the Hankel determinant, $D_{n}(s)$, associated with $w_{\alpha}(x;s,N)$ when $n$ tends to infinity.

\end{abstract}
\textbf{Key words:} Deformed Freud Weight; Unitary Random Matrices; Hankel Determinant; 
Discrete and Continuous Painlev\'{e} Equation; Integrable Systems

\section{Introduction}
The orthogonal polynomials, recurrence coefficients, eigenvalues and the relevant properties associated with the exponential weights ${\rm e}^{-|x|^{m}}$, $m\in\mathbb{N}$ as well as various generalized forms of this weight function on $\mathbb{R}$, have been investigated by many authors, Refs. [\cite{Its,ChenFeigin,ChenLubinsky,PAC2,PAC3,C2,Filipuk,Magnus,C3}, \cite{C6}$-$\cite{Noschese}].

Let $P_{n}(z)$ be the monic polynomials of degree $n$ orthogonal with respect to the deformed Freud weight
\begin{equation}\label{weight}
w(x)=w_{\alpha}(x;s,N)=|x|^{\alpha}{\rm e}^{-N\left[x^{2}+s\left(x^{4}-x^{2}\right)\right]}, ~x\in\mathbb{R},~\alpha>-1,~N>0,~s\in[0,1],
\end{equation}
where the deformation $s\left(x^{4}-x^{2}\right)$, given by $0\leq s\leq1$, interpolates between the generalized Gaussian (Hermite) weight ($|x|^{\alpha}{\rm e}^{-Nx^{2}},~x\in\mathbb{R},~\alpha>-1,~N>0$) when $s=0$ and the Freud weight ($|x|^{\alpha}{\rm e}^{-Nx^{4}},~x\in\mathbb{R},~\alpha>-1,~N>0$) when $s=1$.

From the orthogonality condition, given by
\begin{equation}\label{intro1}
\int_{\mathbb{R}}P_{j}(x)P_{k}(x)w_{\alpha}(x;s,N)dx=h_{j}(s;\alpha,N)\delta_{jk},~~h_{j}(s;\alpha,N)>0,~~j,k\in\{0,1,2,\ldots\},
\end{equation}
where $\delta_{jk}$ denotes the Kronecker delta, and $h_{j}$ is the square of $L^{2}$ norm of the monic polynomial $P_{j}(x)$, there follows the three term recurrence relation,
\begin{equation}\label{intro2}
zP_{n}(z;s,\alpha,N)=P_{n+1}(z;s,\alpha,N)+\beta_{n}(s;\alpha,N)P_{n-1}(z;s,\alpha,N),~~n\geq0,
\end{equation}
subject to the initial conditions
\begin{equation*}
P_{0}(z):=1~~~{\rm and}~~~\beta_{0}P_{-1}(z):=0.
\end{equation*}
Multiplying both sides of Eq. (\ref{intro2}) by $P_{n-1}(z)w_{\alpha}(z;s,N)$ and integrating this with respect to $z$ on $\mathbb{R}$, which, due to the orthogonality condition (\ref{intro1}), gives us
\begin{equation*}
\beta_{n}(s)=\frac{1}{h_{n-1}(s)}\int_{\mathbb{R}}zP_{n}(z)P_{n-1}(z)w_{\alpha}(z;s,N)dx=\frac{h_{n}(s)}{h_{n-1}(s)}>0.
\end{equation*}
It should be pointed out that $P_{n}(z)$ contains only the terms $z^{n-j}$, $j\leq n$ and even, since our weight function $w_{\alpha}(x;s,N)$ is even on $\mathbb{R}$. This implies that
\begin{equation*}
P_{n}(-z)=(-1)^{n}P_{n}(z) ~~~{\rm and}~~~P_{n}(0)P_{n-1}(0)=0.
\end{equation*}
Then we note the monic polynomials $P_{n}(z)$, associated with $w_{\alpha}(x;s,N)$, can be normalized as [\cite{bbbb}],
\begin{equation*}
P_{2j}(z)=z^{2j}+\textbf{p}(2j;s)z^{2j-2}+\cdots+P_{2j}(0),
\end{equation*}
and
\begin{equation*}
\begin{split}
P_{2j+1}(z)&=z^{2j+1}+\textbf{p}(2j+1;s)z^{2j-1}+\cdots+{\rm const.}z\\
&=z\left(z^{2j}+\textbf{p}(2j+1;s)z^{2j-2}+\cdots+{\rm const.}\right).
\end{split}
\end{equation*}

In unitary random matrix theory with the Hermitian ensemble, the joint probability density function of the $n$ eigenvalues $\left\{x_{j}\right\}_{j=1}^{n}$ is given in [\cite{Mehta}] by
\begin{equation*}
p\left(x_{1},\ldots,x_{n}\right)\prod_{j=1}^{n}dx_{j}=\frac{1}{D_{n}(s)}\frac{1}{n!}\prod_{1\leq j<k\leq n}\left(x_{j}-x_{k}\right)^{2}\prod_{\ell=1}^{n}w(x_{\ell})dx_{\ell},
\end{equation*}
where $D_{n}(s)$ denotes the normalization constant (also called partition function), which reads
\begin{equation*}
D_{n}(s)=\frac{1}{n!}\int_{\mathbb{R}^{n}}\prod_{1\leq j<k\leq n}\left(x_{k}-x_{j}\right)^{2}\prod_{\ell=1}^{n}w(x_{\ell})dx_{\ell},
\end{equation*}
so that
\begin{equation*}
\int_{\mathbb{R}^{n}}p\left(x_{1},\ldots,x_{n}\right)\prod_{j=1}^{n}dx_{j}=1.
\end{equation*}
For the problem at hand, the weight function $w(x)$ is given by
\begin{equation*}
w(x)=|x|^{\alpha}{\rm e}^{-N\left[x^{2}+s\left(x^{4}-x^{2}\right)\right]}, ~x\in\mathbb{R},~\alpha>-1,~N>0,~s\in[0,1],
\end{equation*}
and the moments are defined by
\begin{equation*}
\mu_{j}(s;\alpha,N)=\int_{\mathbb{R}}x^{j}|x|^{\alpha}{\rm e}^{-N\left[x^{2}+s\left(x^{4}-x^{2}\right)\right]}dx,~~j\in\{0,1,2,\ldots\}.
\end{equation*}
$D_{n}(s)$ admits two more alternative representations [\cite{szego}],
\begin{equation}\label{Dn}
D_{n}(s):=\det\left(\mu_{j+k}\right)_{j,k=0}^{n-1}=\prod_{j=0}^{n-1}h_{j},
\end{equation}
where $\det\left(\mu_{j+k}\right)_{j,k=0}^{n-1}$ is the determinant of the Hankel matrix (or moment matrix).

For the deformed Freud weight (\ref{weight}), $\mu_{0}(s;\alpha,N)$ can be evaluated in term of the parabolic cylinder (Hermite-Weber) function $D_{\nu}(\zeta)$. By definition
\begin{equation*}
\begin{split}
\mu_{0}(s;\alpha,N):&=\int_{-\infty}^{\infty}|x|^{\alpha}{\rm e}^{-N\left[x^{2}+s\left(x^{4}-x^{2}\right)\right]}dx\\
&=2\int_{0}^{\infty}x^{\alpha}{\rm e}^{-N\left[x^{2}+s\left(x^{4}-x^{2}\right)\right]}dx\\
&=\int_{0}^{\infty}y^{\frac{\alpha-1}{2}}{\rm e}^{-N\left[y+s\left(y^{2}-y\right)\right]}dy\\
&=(2Ns)^{-\frac{\alpha+1}{4}}\Gamma\left(\frac{\alpha+1}{2}\right)\exp\left[\frac{N(1-s)^{2}}{8s}\right]D_{-\frac{\alpha+1}{2}}\left[\frac{N(1-s)}{\sqrt{2Ns}}\right],
\end{split}
\end{equation*}
since the parabolic cylinder function $D_{\nu}(z)$ has the integral representation [\cite{C4}, \S 12.5 (i)]
\begin{equation*}
D_{\nu}(z)=\frac{\exp\left(-\frac{z^{2}}{4}\right)}{\Gamma(-\nu)}\int_{0}^{\infty}t^{-\nu-1}\exp\left(-\frac{t^{2}}{2}-tz \right)dt,~~~\text{$\Re\nu<0$}.
\end{equation*}
We note that the even moments are
\begin{equation*}
\mu_{2n}(s;\alpha,N)=\int_{-\infty}^{\infty}x^{2n}|x|^{\alpha}{\rm e}^{-N\left[x^{2}+s\left(x^{4}-x^{2}\right)\right]}dx=\mu_{0}\left(s;\alpha+2n,N\right),~n\in\mathbb{N},
\end{equation*}
whilst the odd ones are
\begin{equation*}
\mu_{2n+1}(s;\alpha,N)=\int_{-\infty}^{\infty}x^{2n+1}|x|^{\alpha}{\rm e}^{-N\left[x^{2}+s\left(x^{4}-x^{2}\right)\right]}dx=0,~n\in\mathbb{N}.
\end{equation*}

It should be pointed out our recurrence coefficients, moments, $L^{2}$ norm of orthogonal polynomials, Hankel determinant depend on $s$, also parameters $N$ and $\alpha$, but we may not display them unless we have to.
\begin{rem}
The dependence of the orthogonal polynomials $P_{n}(x;s,\alpha,N)$ on $s$, $\alpha$ and $N$ can be seen from its determinant representation in terms of the moments, or alternatively, from the Heine formula [\cite{szego}, Eq. 2.2.10]
\begin{equation*}
\begin{split}
P_{n}(z;s,\alpha,N)&=\frac{1}{D_{n}(s;\alpha,N)}\frac{1}{n!}\int_{\mathbb{R}^{n}}\prod_{1\leq j<k\leq n}\left(x_{j}-x_{k}\right)^{2}\prod_{\ell=1}^{n}\left(z-x_{\ell}\right)w_{\alpha}(x_{\ell};s,N)dx_{\ell}\\
&=\frac{\det\left(\int_{\mathbb{R}}x^{j+k}(z-x)w_{\alpha}(x;s,N)dx\right)_{j,k=0}^{n-1}}{\det\left(\int_{\mathbb{R}}x^{j+k}w_{\alpha}(x;s,N)dx\right)_{j,k=0}^{n-1}}.
\end{split}
\end{equation*}
\end{rem}

The remainder of this paper is organized as follows. In Sect. 2 we prove that the recurrence coefficients satisfy the first discrete Painlev\'{e} equation (${\rm dP_{I}}$), a differential-difference equation and a second order ODE. Sects. 3-5 discuss the asymptotic behavior of the recurrence coefficients in three cases, i.e., $s\rightarrow0$, $n,N$ fixed; $n\rightarrow\infty$, $N$ fixed; $n,N\rightarrow\infty$ but the quantity $n/N$ fixed. In Sect. 6, we obtain a second order ODE satisfied by the deformed Freud orthogonal polynomials applying the ladder operator approach, as well as a biconfluent Heun equation when $n$ tends to infinity. Sect. 7 investigates the existence and uniqueness for the solutions of the ${\rm dP_{I}}$. Finally, in Sect. 8, with the aid of the compatibility conditions for the ladder operators, we find the asymptotic expansion of the logarithm of the Hankel determinant associated with our weight.

\section{Recurrence coefficients of the deformed Freud polynomials}
It is known [\cite{C1,C2,C3}] that the recurrence coefficients satisfy the following second order nonlinear difference equation,
\begin{pro}\label{pro3.1}
The recurrence coefficients $\beta_{n}(s)$ satisfy the first discrete Painlev\'{e} equation, ${\rm dP_{I}}$,
\begin{equation}\label{eq3}
\beta_{n+1}(s)+\beta_{n}(s)+\beta_{n-1}(s)=\frac{z_{n}+\gamma(-1)^{n}}{\beta_{n}(s)}+\delta,
\end{equation}
with
\begin{equation*}
z_{n}=\frac{2n+\alpha}{8Ns},~~\gamma=-\frac{\alpha}{8Ns}~~{\rm and}~~\delta=\frac{s-1}{2s}.
\end{equation*}
\end{pro}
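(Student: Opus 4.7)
My plan is to derive the relation from a single integration-by-parts identity (the ``string equation''), combined with iterated use of the three-term recurrence, bypassing the full ladder-operator machinery. Write the weight as $w=e^{-v}$ with $v(x)=N[x^2+s(x^4-x^2)]-\alpha\log|x|$, so that
\[
v'(x)=2N(1-s)x+4Nsx^{3}-\frac{\alpha}{x}.
\]
The starting point is
\[
0=\int_{\mathbb{R}}\bigl[P_{n}(x)P_{n-1}(x)w(x)\bigr]'\,dx,
\]
whose boundary contribution at $\pm\infty$ vanishes by the exponential decay of $w$, and whose contribution at $x=0$ (where $w\sim|x|^{\alpha}$) is killed because $P_{n}(0)P_{n-1}(0)=0$ for parity reasons. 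Expanding the derivative and using $\int P_n'P_{n-1}w\,dx = nh_{n-1}$ together with $\int P_n P_{n-1}'w\,dx = 0$, this collapses to
\[
\int_{\mathbb{R}} P_{n}(x)P_{n-1}(x)v'(x)w(x)\,dx = n\,h_{n-1}.
\]

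Next I would evaluate the three pieces of the left-hand side by iterating the recurrence $xP_{n-1}=P_{n}+\beta_{n-1}P_{n-2}$. A direct computation yields $\int xP_{n}P_{n-1}w\,dx=h_{n}$ and $\int x^{3}P_{n}P_{n-1}w\,dx=(\beta_{n+1}+\beta_{n}+\beta_{n-1})h_{n}$, while the $-\alpha/x$ piece introduces the auxiliary quantity
\[
q_{n}:=\frac{1}{h_{n-1}}\int_{\mathbb{R}}\frac{P_{n}(x)P_{n-1}(x)}{x}\,w(x)\,dx.
\]
Dividing by $h_{n-1}$ and using $h_{n}/h_{n-1}=\beta_{n}$, the string identity becomes
\[
4Ns\,\beta_{n}(\beta_{n+1}+\beta_{n}+\beta_{n-1})+2N(1-s)\beta_{n}-\alpha\,q_{n}=n.
\]

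To close the system I would determine $q_{n}$ by substituting $P_{n+1}=xP_{n}-\beta_{n}P_{n-1}$ into the numerator defining $q_{n+1}$, which at once yields the one-step relation $q_{n+1}+q_{n}=1$; combined with $q_{1}=\mu_{0}/h_{0}=1$ this produces $q_{n}=(1-(-1)^{n})/2$. Plugging this into the displayed identity, dividing by $4Ns\,\beta_{n}$, and identifying $z_{n}=(2n+\alpha)/(8Ns)$, $\gamma=-\alpha/(8Ns)$, and $\delta=(s-1)/(2s)$ then reproduces the claimed ${\rm dP_{I}}$. The only delicate point in the argument is justifying the opening integration by parts when $\alpha\in(-1,0)$: in this range $w'$ has a non-integrable singularity at the origin in isolation, but after regularising on $\mathbb{R}\setminus(-\epsilon,\epsilon)$ the boundary term is $O(\epsilon^{\alpha+1})$ because $P_{n}P_{n-1}$ is odd and vanishes linearly at $0$, and this goes to zero since $\alpha>-1$.
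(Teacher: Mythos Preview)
Your proof is correct and takes essentially the same approach as the paper: an integration-by-parts (string-equation) identity for $P_nP_{n-1}w$, iterated use of the three-term recurrence to evaluate the $x$- and $x^{3}$-moments, and the parity computation of the $1/x$ integral yielding $\Delta_n=(1-(-1)^n)/2$. The only cosmetic difference is that the paper factors $w=|x|^{\alpha}w_0$ and quotes the $\Delta_n$ identity from the literature rather than deriving your recursion $q_{n+1}+q_n=1$ in place (that very recursion in fact reappears later in the paper, in the proof of Lemma~\ref{lem5.1}).
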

\begin{proof}
Writing
\begin{equation*}
w(x)=|x|^{\alpha}w_{0}(x),
\end{equation*}
where
\begin{equation*}
w_{0}(x):={\rm e}^{-v_{0}(x) }~~~ {\rm with} ~~~v_{0}(x):=Nsx^{4}+N(1-s)x^{2}.
\end{equation*}
With the aid of the three term recurrence relation (\ref{intro2}), repeatedly, we have
\begin{equation*}
\begin{split}
\int_{-\infty}^{\infty}\left[P_{n}(x)P_{n-1}(x)|x|^{\alpha}\right]'w_{0}(x)dx=&4Ns\int_{-\infty}^{\infty}x^{3}P_{n}(x)P_{n-1}(x)w(x)dx\\
&+2N(1-s)\int_{-\infty}^{\infty}xP_{n}(x)P_{n-1}(x)w(x)dx\\
=&4Ns\sqrt{\beta_{n}}\left(\beta_{n+1}+\beta_{n}+\beta_{n-1}\right)+2N(1-s)\sqrt{\beta_{n}},
\end{split}
\end{equation*}
Combining this result and using the identity [\cite{C1}, Eq. (13)],
\begin{equation*}
\int_{-\infty}^{\infty}\left[P_{n}(x)P_{n-1}(x)|x|^{\alpha}\right]'\widetilde{w}(x)dx=\frac{n+\alpha\Delta_{n}}{\sqrt{\beta_{n}}},~~ \Delta_{n}=\frac{1-(-1)^{n}}{2},
\end{equation*}
which holds whenever $\widetilde{w}(x)$ is a symmetric weight on the real line, then we get
\begin{equation}\label{eeee}
\frac{n+\alpha\Delta_{n}}{4Ns}=\beta_{n}\left[\beta_{n+1}+\beta_{n}+\beta_{n-1}+\frac{1-s}{2s}\right].
\end{equation}
\end{proof}
\begin{rem}
Multiplying Eq. (\ref{eeee}) by $s$, we see that $\beta_{n}(0;\alpha,N)=\frac{n+\alpha\Delta_{n}}{2N}$.
\end{rem}
\begin{rem}
The nonlinear discrete equation (\ref{eq3}) can be found from [\cite{C2}, Eq. (23), P. 5], written by Freud, see also [\cite{ANSA}, \S2]. Joshi and Lustri [\cite{JoshiLustri}] studied the large $n$ behavior of ${\rm dP_{I}}$, see also [\cite{PAC3}].
\end{rem}

\begin{pro}\label{pro3.2}
The recurrence coefficients $\beta_{n}(s)$ satisfy the differential-difference equation,
\begin{equation}\label{eq4}
\beta_{n}'(s)=\frac{\beta_{n}(s)}{2s}\left[N(s+1)\left(\beta_{n+1}(s)-\beta_{n-1}(s)\right)-1\right].
\end{equation}
\end{pro}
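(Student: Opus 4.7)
The approach is to differentiate $h_n(s)=\int_{\mathbb{R}} P_n^2(x)\,w_\alpha(x;s,N)\,dx$ with respect to $s$ and exploit the ratio representation $\beta_n=h_n/h_{n-1}$. Since $\partial_s w_\alpha=-N(x^4-x^2)w_\alpha$, and the $s$-derivative of the monic polynomial $P_n(x;s)$ is a polynomial of strictly lower degree with the same parity as $n$, orthogonality kills the cross term $2\int P_n(\partial_s P_n)w\,dx$ and leaves
\begin{equation*}
\frac{dh_n}{ds}=-N\int_{\mathbb{R}}(x^4-x^2)P_n^2(x)\,w_\alpha(x;s,N)\,dx.
\end{equation*}

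Next I would evaluate the two moments $\int x^{2k}P_n^2\,w\,dx$ for $k=1,2$ using the three-term recurrence (\ref{intro2}). A single iteration expresses $x^2 P_n$ as the linear combination of $P_{n+2}$, $P_n$, $P_{n-2}$ with coefficients $1$, $\beta_{n+1}+\beta_n$, $\beta_n\beta_{n-1}$; orthogonality gives $\int x^2P_n^2w\,dx=(\beta_{n+1}+\beta_n)h_n$ directly, and squaring the expansion together with $h_{n+2}/h_n=\beta_{n+2}\beta_{n+1}$ and $h_{n-2}/h_n=1/(\beta_n\beta_{n-1})$ gives $\int x^4P_n^2 w\,dx / h_n=\beta_{n+2}\beta_{n+1}+(\beta_{n+1}+\beta_n)^2+\beta_n\beta_{n-1}$. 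Forming the logarithmic derivative $\beta_n'/\beta_n=h_n'/h_n-h_{n-1}'/h_{n-1}$ then produces an expression in the five adjacent coefficients $\beta_{n-2},\ldots,\beta_{n+2}$, which after telescoping and grouping collapses to
\begin{equation*}
\frac{\beta_n'}{\beta_n}=-N\Big[\beta_{n+2}\beta_{n+1}-\beta_{n-1}\beta_{n-2}+(\beta_{n+1}-\beta_{n-1})(\beta_{n+1}+\beta_n+\beta_{n-1})\Big]+N(\beta_{n+1}-\beta_{n-1}).
\end{equation*}

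The main obstacle is the algebraic elimination of $\beta_{n\pm 2}$ from this five-term formula. For this I would invoke ${\rm dP_I}$ from Proposition~\ref{pro3.1} at the shifted indices $n+1$ and $n-1$, multiplied through by $\beta_{n+1}$ and $\beta_{n-1}$ respectively, and subtract to replace $\beta_{n+1}\beta_{n+2}-\beta_{n-1}\beta_{n-2}$. Because $\Delta_{n+1}=\Delta_{n-1}$ (same parity), the $\alpha$-dependent terms cancel and the linear-in-$n$ terms combine into the single constant $1/(2Ns)$; the residual $(1-s)/(2s)$ contribution then combines with the leftover $+N(\beta_{n+1}-\beta_{n-1})$ via $1+(1-s)/(2s)=(s+1)/(2s)$ to assemble exactly the coefficient appearing in (\ref{eq4}), and multiplying back by $\beta_n$ yields the stated differential-difference equation.
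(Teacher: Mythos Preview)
Your proposal is correct and follows essentially the same approach as the paper: differentiate $h_n(s)$, evaluate the $x^2$- and $x^4$-moments via the three-term recurrence, form $\beta_n'/\beta_n=h_n'/h_n-h_{n-1}'/h_{n-1}$, and then invoke ${\rm dP_I}$ to eliminate the extraneous $\beta$'s. The only cosmetic difference is the order of operations: the paper first uses ${\rm dP_I}$ at indices $n$ and $n+1$ to simplify $h_n'/h_n$ itself to $\frac{N(1+s)}{2s}(\beta_{n+1}+\beta_n)-\frac{2n+1+\alpha}{4s}$ and then subtracts, whereas you subtract first and apply ${\rm dP_I}$ at $n\pm1$ afterwards; both routes land on the same identity.
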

\begin{proof}
Differentiating $h_{n}(s)$ with respect to $s$,
\begin{equation}\label{pro3.2eq1}
\begin{split}
\frac{dh_{n}(s)}{ds}&=\frac{d}{ds}\int_{-\infty}^{\infty}P_{n}^{2}(x;s)w (x;s)dx\\
&=N\left[\int_{-\infty}^{\infty}\left(x^{2}-x^{4}\right)P_{n}^{2}(x;s)w (x;s)dx\right]\\\
&=Nh_{n}(s)\left[\beta_{n+1}\left(1-\beta_{n+2}-\beta_{n+1}-\beta_{n}\right)+\beta_{n}\left(1-\beta_{n+1}-\beta_{n}-\beta_{n-1}\right)\right]\\
&=Nh_{n}(s)\left[\frac{1+s}{2s}\left(\beta_{n+1}+\beta_{n}\right)-\frac{2n+1}{4Ns}-\frac{\alpha}{4Ns}\right].
\end{split}
\end{equation}
Substituting this into
\begin{equation*}
\begin{split}
\frac{d\beta_{n}(s)}{ds}=\beta_{n}(s)\left[\frac{h'_{n}(s)}{h_{n}(s)}-\frac{h'_{n-1}(s)}{h_{n-1}(s)}\right],
\end{split}
\end{equation*}
the result of (\ref{eq4}) will be obtained.
\end{proof}

Based on the formulas (\ref{eeee}) and (\ref{eq4}), with $n$ and $n-1$, eliminating the term $\beta_{n+1}$ and $\beta_{n-1}$, then we can get a second order ODE for the coefficients $\beta_{n}(s)$.
\begin{lem}
The recurrence coefficients $\beta_{n}(s)$ satisfy the equation
\begin{equation}\label{eq2.5}
\begin{split}
\beta_{n}''(s)&=\frac{\beta_{n}'^{2}(s)}{2\beta_{n}(s)}-\frac{2+s}{s(1+s)}\beta_{n}'(s)+\frac{3N^{2}(1+s)^{2}}{8s^{2}}\beta_{n}^{3}(s)+\frac{N^{2}(1+s)^{2}(1-s)}{4s^{3}}\beta_{n}^{2}(s)\\
&+\bigg[\frac{N^{2}(1-s)^{2}(1+s)^{3}-4s^{2}(3-s)}{32s^{4}(1+s)}
+\frac{N(1+s)^{2}}{16s^{3}}p_{n}\bigg]\beta_{n}(s)-\frac{(1+s)^{2}q_{n}}{128s^{4}\beta_{n}(s)},
\end{split}
\end{equation}
where the parameters $p_{n}$ and $q_{n}$ are given by
\\

$
p_{n}=\left\{\begin{array}{l@{\hspace{1cm}}l}
n+2\alpha &\text{$n$ {\rm even}}\\
n-\alpha&\text{$n$ {\rm odd}}\\
\end{array}\right.$
and ~~$
q_{n}=\left\{\begin{array}{l@{\hspace{1cm}}l}
n^{2}&\text{$n$ {\rm even}}\\
(n+\alpha)^{2}&\text{$n$ {\rm odd}}~.\\
\end{array}\right.
$
\end{lem}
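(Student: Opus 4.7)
The strategy, as flagged in the lead-in to the lemma, is to combine the algebraic relation (\ref{eeee}) with the differential-difference relation (\ref{eq4}), both evaluated at the two neighbouring indices $n$ and $n-1$, and to translate these into a single closed equation in $(\beta_n,\beta_n',\beta_n'')$. First, (\ref{eq4}) and (\ref{eeee}) at index $n$ give, respectively,
\[
\beta_{n+1}-\beta_{n-1}=\frac{1}{N(s+1)}\Bigl[\frac{2s\,\beta_n'}{\beta_n}+1\Bigr],\qquad
\beta_{n+1}+\beta_{n-1}=\frac{n+\alpha\Delta_n}{4Ns\,\beta_n}-\beta_n-\frac{1-s}{2s}.
\]
Solving this $2\times 2$ linear system expresses $\beta_{n+1}$ and $\beta_{n-1}$ individually as explicit rational functions of $\beta_n$ and $\beta_n'$ (and $s$); write the resulting formula as $\beta_{n-1}=F(s,\beta_n,\beta_n')$.

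Next, (\ref{eeee}) at index $n-1$ isolates
\[
\beta_{n-2}=\frac{n-1+\alpha\Delta_{n-1}}{4Ns\,\beta_{n-1}}-\beta_n-\beta_{n-1}-\frac{1-s}{2s},
\]
which, plugged into (\ref{eq4}) at index $n-1$, gives
\[
\beta_{n-1}'=\frac{\beta_{n-1}}{2s}\bigl[N(s+1)(\beta_n-\beta_{n-2})-1\bigr]=\frac{N(s+1)\beta_{n-1}}{2s}\!\left(2\beta_n+\beta_{n-1}+\frac{1-s}{2s}\right)-\frac{(s+1)(n-1+\alpha\Delta_{n-1})}{8s^2}-\frac{\beta_{n-1}}{2s}.
\]
After substituting $\beta_{n-1}=F(s,\beta_n,\beta_n')$ this becomes one expression for $\beta_{n-1}'$ purely in the variables $\beta_n,\beta_n',s$.

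Independently, differentiating $\beta_{n-1}=F(s,\beta_n,\beta_n')$ with respect to $s$ yields a second expression for $\beta_{n-1}'$, this one linear in $\beta_n''$ with coefficient $\tfrac{-s}{N(s+1)\beta_n}$ (coming from the $\beta_n'/\beta_n$ piece in $F$). Equating the two forms and solving for $\beta_n''$ produces the ODE (\ref{eq2.5}). The parity-dependent constants $p_n,q_n$ arise because the combinations $n+\alpha\Delta_n$ and $n-1+\alpha\Delta_{n-1}$ that appear throughout alternate between $\{n,\,n-1+\alpha\}$ (when $n$ is even) and $\{n+\alpha,\,n-1\}$ (when $n$ is odd), since $\Delta_n+\Delta_{n-1}=1$.

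The principal obstacle is purely algebraic bookkeeping: the intermediate rational functions of $\beta_n$ and $\beta_n'$ are unwieldy, and arriving at the clean form (\ref{eq2.5}) requires a systematic collection of coefficients of $\beta_n^{3},\beta_n^{2},\beta_n,1$ and $\beta_n^{-1}$ after clearing denominators by $s^{4}(s+1)\beta_n$. Splitting the final simplification into the two parity cases only at the very end is the most economical way to recognize the closed forms $p_n\in\{n+2\alpha,n-\alpha\}$ and $q_n\in\{n^2,(n+\alpha)^2\}$ stated in the lemma.
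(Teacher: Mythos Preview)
Your proposal is correct and follows essentially the same route as the paper. The paper likewise solves (\ref{eeee}) and (\ref{eq4}) at index $n$ for $\beta_{n-1}$ in terms of $\beta_n,\beta_n'$, then combines (\ref{eeee}) and (\ref{eq4}) at index $n-1$ into a relation among $\beta_n,\beta_{n-1},\beta_{n-1}'$, and finally substitutes the expression for $\beta_{n-1}$ together with its $s$-derivative (which brings in $\beta_n''$) to obtain (\ref{eq2.5}); your ``two expressions for $\beta_{n-1}'$'' framing is exactly this last step, just phrased differently.
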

\begin{proof}
From Eq. (\ref{eeee}), we have
\begin{equation}\label{betan3}
\beta_{n-1}=\frac{n+\alpha\Delta_{n}}{4Ns\beta_{n}}-\beta_{n+1}-\beta_{n}-\frac{1-s}{2s},
\end{equation}
and
\begin{equation}\label{betan4}
\beta_{n-2}=\frac{n-1+\alpha\Delta_{n-1}}{4Ns\beta_{n-1}}-\beta_{n}-\beta_{n-1}-\frac{1-s}{2s}.
\end{equation}
From Eq. (\ref{eq4}), we obtain
\begin{equation}\label{betan5}
\beta_{n+1}=\frac{2s}{N(1+s)}\frac{\beta'_{n}}{\beta_{n}}+\frac{1}{N(1+s)}+\beta_{n-1}.
\end{equation}
Substituting Eq. (\ref{betan5}) into Eq. (\ref{betan3}), we get
\begin{equation}\label{betan6}
\beta_{n-1}=-\frac{s}{N(1+s)}\frac{\beta'_{n}}{\beta_{n}}+\frac{n+\alpha\Delta_{n}}{8Ns\beta_{n}}-\frac{\beta_{n}}{2}-\frac{1}{2N(1+s)}-\frac{1-s}{4s}.
\end{equation}
Based on the formulas (\ref{betan5}) and (\ref{betan4}), we find
\begin{equation}\label{betan7}
\begin{split}
\beta_{n}=&\frac{2s}{N(s+1)}\frac{\beta'_{n-1}}{\beta_{n-1}}+\frac{1}{N(1+s)}+\beta_{n-2}\\
=&\frac{2s}{N(1+s)}\frac{\beta'_{n-1}}{\beta_{n-1}}+\frac{1}{N(1+s)}+\frac{n-1+\alpha_{n-1}}{4Ns\beta_{n-1}}-\beta_{n}-\beta_{n-1}-\frac{1-s}{2s}.
\end{split}
\end{equation}
Substituting the expression of $\beta_{n-1}$, given by Eq. (\ref{betan6}), with its derivative into Eq. (\ref{betan7}), one will find the result (\ref{eq2.5}) after some simplification.
\end{proof}

\section{Asymptotics for the recurrence coefficient $\beta_{n}(s)$ as $s\rightarrow0$ }
\begin{thm}\label{thm2}
Assuming $s\rightarrow0$, the recurrence coefficient $\beta_{n}(s;\alpha,N)$ satisfying the second order ODE (\ref{eq2.5}), has the asymptotic expansion
\begin{equation*}
\begin{split}
\beta_{n}=&\frac{n}{2}\bigg\{\frac{1}{N}+\frac{N-3n-2\alpha}{N^{2}}s+\frac{18n^{2}-9nN+N^{2}+22n\alpha+6\left(-N\alpha+\alpha^{2}+1\right)}{N^{3}}s^{2}\\
&+\frac{1}{N^{4}}\Big[N^{3}+nN\left(40n^{2}+\frac{15}{2}n-18N+\frac{85}{2}\right)+30N-135n^{3}-162n-\big(12N^{2}\\
&-30nN+236n^{2}-80n^{2}N+100\big)\alpha
+\big(30N-126n\big)\alpha^{2}-20\alpha^{3}\Big]s^{3}\\
&+\mathcal{O}\big(s^4\big)\bigg\},~~~~n~{\rm even},
\end{split}
\end{equation*}
whilst
\begin{equation*}
\begin{split}
\beta_{n}=&\frac{n}{2}\bigg\{\frac{1}{N}+\frac{N-3n-\alpha}{N^{2}}s+\frac{18n^{2}-9nN+N^{2}+14n\alpha-3N\alpha+2\alpha^{2}+6}{N^{3}}s^{2}\\
&+\frac{1}{N^{4}}\bigg[N^{3}
+nN\left(40n^{2}+\frac{15}{2}n-18N+\frac{85}{2}\right)+30N-135n^{3}-162n\\
&-\big(6N^{2}-40n^{2}N+15nN-\frac{85}{2}N+169n^{2}+62\big)\alpha
+\left(\frac{15}{2}N-40nN-59n\right)\alpha^{2}\\
&-5\alpha^{3}(1+8N)\bigg]s^{3}+\mathcal{O}\big(s^4\big)\bigg\},~~~~n~{\rm odd}.
\end{split}
\end{equation*}
\end{thm}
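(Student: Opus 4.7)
The plan is to substitute the Taylor ansatz $\beta_n(s) = a_0 + a_1 s + a_2 s^2 + a_3 s^3 + O(s^4)$ into the second order ODE \eqref{eq2.5} and match coefficients order by order in $s$. The analyticity of $\beta_n(\cdot)$ at $s=0$ is inherited from the analyticity of each moment $\mu_j(s;\alpha,N)$ at $s=0$ (dominated convergence applied to the defining integral), together with $h_{n-1}(0)>0$, so that $\beta_n = h_n/h_{n-1}$ is a quotient of two analytic functions of $s$ in a neighbourhood of the origin, justifying the ansatz.

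Since the right side of \eqref{eq2.5} carries a fourth-order pole at $s=0$ while $\beta_n''(s)$ is regular there, the four most singular orders $s^{-4}, s^{-3}, s^{-2}, s^{-1}$ must cancel identically. At order $s^{-4}$ only the leading part of $\tfrac{N^{2}(1-s)^{2}(1+s)^{2}}{32\,s^{4}}\beta_n$ (obtained after simplifying the $(1+s)$ factor from the numerator and denominator of the original coefficient) and $-\tfrac{(1+s)^{2} q_n}{128\,s^{4}\,\beta_n}$ contribute, yielding $a_0^{2} = q_n/(4N^{2})$, consistent with the initial value $\beta_n(0)=(n+\alpha\Delta_n)/(2N)$ from Remark 2.1. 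Successively matching at orders $s^{-3}, s^{-2}, s^{-1}, s^{0}, s^{1}, s^{2}$ produces, after Taylor expanding $1/\beta_n$, $(1+s)^{-1}$ and the polynomial numerators, a single \emph{linear} equation in the next unknown $a_k$: once $a_0$ is fixed, the cubic and reciprocal nonlinearities $\beta_n^{3}$ and $1/\beta_n$ linearize in the new coefficient at each step. Solving these equations in turn gives $a_1, a_2, a_3$ explicitly.

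The parity split in the conclusion is inherited directly from the definitions of $p_n$ and $q_n$: substituting $q_n=n^{2},\, p_n=n+2\alpha$ into the above recursion for the $a_k$ produces the even-$n$ expansion, while substituting $q_n=(n+\alpha)^{2},\, p_n=n-\alpha$ produces the odd-$n$ one. The main obstacle is purely algebraic bookkeeping: at each order one has to combine contributions from $\beta_n''$, $(\beta_n')^{2}/(2\beta_n)$, the rational $s$-coefficients of $\beta_n^{3}, \beta_n^{2}, \beta_n, 1/\beta_n$, and the expansions of $(1\pm s)^{k}$ through sufficient order, and tracking the many terms carefully is the only nontrivial aspect. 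As a consistency check one can expand $\beta_{n\pm 1}(s)$ by the same procedure and verify \eqref{eq4} and \eqref{eeee} order by order; in particular, the identity $\beta_{n+1}(0)-\beta_{n-1}(0)=1/N$, needed to keep the right side of \eqref{eq4} finite at $s=0$, emerges at the zeroth-order matching and serves as a useful diagnostic for the computation.
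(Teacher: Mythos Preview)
Your proposal is correct and follows essentially the same route as the paper: both substitute a power-series ansatz in $s$ into the ODE \eqref{eq2.5}, use the vanishing of the $s^{-4}$ coefficient to fix the leading term $\beta_n(0)=\sqrt{q_n}/(2N)$, and then determine the remaining coefficients by matching successive powers of $s$, finally specializing $(p_n,q_n)$ to the even and odd cases. The only cosmetic difference is that the paper pulls out the prefactor $\sqrt{q_n}/(2N)$ and expands $\beta_n=\tfrac{\sqrt{q_n}}{2N}\sum_{k\ge0}c_ks^k$, while you work with $\beta_n=\sum_{k\ge0}a_ks^k$; your added justification of analyticity of $\beta_n$ at $s=0$ via the moments and the consistency checks against \eqref{eq4} and \eqref{eeee} are welcome refinements that the paper omits.
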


\begin{proof}
We rewrite the Eq. (\ref{eq2.5}) as follows:
\begin{equation}\label{eq5}
\begin{split}
0&=\frac{1}{s^{4}}\left[\frac{N^{2}(1-s)^{2}(1+s)^{2}\beta_{n}}{32}-\frac{(1+s)^{2}q_{n}}{128\beta_{n}}\right]+\frac{1}{s^{3}}\left[\frac{N^{2}(1+s)^{2}(1-s)}{4}\beta_{n}^{2}
+\frac{N(1+s)^{2}p_{n}}{16}\beta_{n}\right]\\
&+\frac{1}{s^{2}}\left[\frac{3N^{2}(1+s)^{2}}{8}\beta_{n}^{3}-\frac{3-s}{8(1+s)}\beta_{n}\right]-\frac{1}{s}\frac{(2+s)\beta'_{n}}{1+s}+\frac{\beta_{n}'^{2}}{2\beta_{n}}-\beta_{n}''.
\end{split}
\end{equation}
Considering the coefficient of the term $s^{-4}$ then one finds that
\begin{equation*}
\beta_{n}\simeq \frac{\sqrt{q_{n}}}{2N(1-s)}=\frac{\sqrt{q_{n}}}{2N}\left(1+s+s^{2}+s^{3}+\cdots\right).
\end{equation*}
Then we suppose $\beta_{n}$ has the following expansion
\begin{equation*}
\beta_{n}=\frac{\sqrt{q_{n}}}{2N}\left(\sum_{k=0}^{\infty}c_{k}s^{k}\right),~~s\rightarrow0.
\end{equation*}
Substituting this into the formula (\ref{eq2.5}) with directing calculating, we get the asymptotic expansion of $\beta_{n}$, reads
\begin{equation*}
\begin{split}
\beta_{n}(s)=&\frac{\sqrt{q_{n}}}{2N}\bigg[1+\left( \frac{N-p_{n}-2\sqrt{q_{n}}}{N}\right)s+\frac{1}{2N^{2}}\big(2N^{2}-6Np_{n}+3p_{n}^{2}-12N\sqrt{q_{n}}\\
&+16p_{n}\sqrt{q_{n}}+17q_{n}\big)s^{2}+\frac{1}{2N^{3}}\Big(64N+2N^{3}-64p_{n}-12N^{2}p_{n}+15Np_{n}^{2}-5p_{n}^{3} \\
&-128\sqrt{q_{n}}-24N^{2}\sqrt{q_{n}}+80Np_{n}\sqrt{q_{n}}-48p_{n}^{2}\sqrt{q_{n}}+85Nq_{n}-125p_{n}q_{n}\\
&-92q_{n}^{\frac{3}{2}}\Big)s^{3}+\mathcal{O}\left(s^{4}\right)\bigg],~~s\rightarrow0.\\
\end{split}
\end{equation*}
which completes the proof.
\end{proof}

\begin{rem}
For $s=0, ~\alpha=0,~N=1$, which are the Hermite polynomials orthogonal with the weight function ${\rm e}^{-x^{2}}$, one finds $\beta_{n}=\frac{n}{2}$, see [\cite{C2,C1}].
\end{rem}

\begin{figure}[!ht]
\centering
\begin{minipage}[c]{0.5\textwidth}
\centering
\includegraphics[height=4.5cm,width=6.7cm]{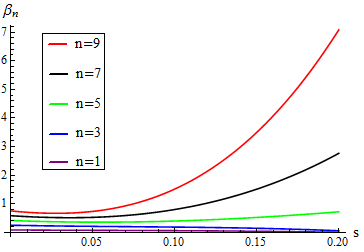}
\end{minipage}%
\begin{minipage}[c]{0.5\textwidth}
\centering
\includegraphics[height=4.5cm,width=6.7cm]{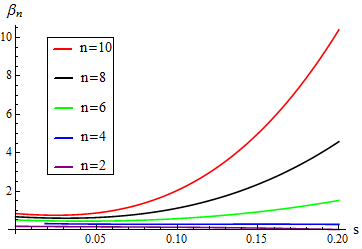}
\end{minipage}
\caption{Plots of the recurrence coefficients $\beta_{n}(s;\alpha,N)$, with $\alpha=12$, $N=6$, $s$ small.}
\end{figure}

\section{Asymptotics for the recurrence coefficient $\beta_{n}(s;\alpha,N)$ as $n\rightarrow\infty$ with $N$ fixed.}
The asymptotic expansion of $\beta_{n}(s;\alpha,N)$, satisfied by Eq. (\ref{eq3}), for the special case when $\alpha=0$ and $s=N=1$ was studied by Lew and Quarles [\cite{Lew}], see also [\cite{Nevai,Noschese}].  To analyze the asymptotic properties of $\beta_{n}(s;\alpha,N)$ as $n\rightarrow\infty$. We first give a brief description of Coulomb fluid, see e.g. [\cite{ChenMckay,ChenLawrence,ChenIsmail2}]. The energy of a system of $n$ logarithmically repelling particles on the line confined by an external potential $v(x)$ reads
\begin{equation*}
E\left(x_{1},x_{2},\ldots,x_{n}\right)=-2\sum_{1\leq j<k\leq n}\log\left|x_{j}-x_{k}\right|+\sum_{j=1}^{n}v\left(x_{j}\right),
\end{equation*}
The collection particles, for large enough $n$, can be approximated as a continuous fluid with a certain density $\sigma(x)$ supported on a single interval $(a,b)\in\mathbb{R}$, see [\cite{Dyson}]. This density, $\sigma(x)$, corresponding to the equilibrium density of the fluid, is obtained by the constrained minimization of the free-energy function, $F[\sigma]$,
\begin{equation*}
F[\sigma]:=\int_{a}^{b}\sigma(x)v(x)dx-\int_{a}^{b}\int_{a}^{b}\sigma(x)\log|x-y|\sigma(y)dxdy.
\end{equation*}
subject to
\begin{equation*}
\int_{a}^{b}\sigma(x)dx=n,~~~\sigma(x)>0.
\end{equation*}

Upon minimization [\cite{Tsuji}], the equilibrium density $\sigma(x)$ is found to satisfy the integral equation,
\begin{equation}\label{eqt7}
L:=v(x)-2\int_{a}^{b}\log|x-y|\sigma(y)dy,~~~~x\in[a,b],
\end{equation}
where $L$ is the Lagrange multiplier. The derivative of this equation over $x$ gives rise to a singular integral equation
\begin{equation*}
v'(x)-2{\rm p. v.}\int_{a}^{b}\frac{\sigma(y)}{x-y}dy=0,~~~x\in[a,b].
\end{equation*}
where ${\rm p. v.}$ denotes the Cauchy principal value. Based on the theory of singular integral equations [\cite{Mikhlin}], we find
\begin{equation}\label{sigmax}
\sigma(x)=\frac{1}{2\pi^{2}}\sqrt{\frac{b-x}{x-a}}{\rm p. v.}\int_{a}^{b}\frac{v'(y)}{y-x}\sqrt{\frac{y-a}{b-y}}dy.
\end{equation}
Thus the normalization, $\int_{a}^{b}\sigma(x)dx=n$, becomes
\begin{equation*}
\frac{1}{2\pi}\int_{a}^{b}\sqrt{\frac{y-a}{b-y}}v'(y)dy=n.
\end{equation*}
with a supplementary condition, see [\cite{ChenIsmail2,ChenLawrence}],
\begin{equation*}
\int_{a}^{b}\frac{v'(x)}{\sqrt{(b-x)(x-a)}}=0.
\end{equation*}
Consequently, the normalization condition becomes
\begin{equation}\label{normalization}
\int_{a}^{b}\frac{xv'(x)}{\sqrt{(b-x)(x-a)}}=2\pi n.
\end{equation}

For the weight at hand, the equilibrium density $\sigma(x)$ supported on $(-b, b)$, and
\begin{equation}\label{vx}
v(x)=-\log w_{\alpha}(x;s)=-\alpha\log|x|+Nsx^{4}+N(1-s)x^{2},
\end{equation}
we find,
\begin{lem}\label{bsquare}
For sufficiently large $n$, with the parameters $N$ and $\alpha$ finite, we have
\begin{equation}\label{b}
b^{2}\simeq\frac{2\sqrt{3}}{3}N^{-\frac{1}{2}}s^{-\frac{1}{2}}n^{\frac{1}{2}}.
\end{equation}
\end{lem}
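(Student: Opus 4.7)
The plan is to apply the normalization condition \eqref{normalization} with $v(x)$ given by \eqref{vx} and exploit the $x\mapsto -x$ symmetry of the weight to reduce the problem to evaluating elementary trigonometric integrals. Because $w_{\alpha}(x;s,N)$ is even, the equilibrium support is symmetric about the origin, so we set $a=-b$ and $(b-x)(x-a)=b^{2}-x^{2}$. The derivative of the external potential is
\begin{equation*}
v'(x)=-\frac{\alpha}{x}+4Nsx^{3}+2N(1-s)x,
\end{equation*}
which is odd; consequently the supplementary condition $\int_{-b}^{b}v'(x)/\sqrt{b^{2}-x^{2}}\,dx=0$ (interpreted as a principal value because of the $-\alpha/x$ term) is trivially satisfied by parity, and we need only impose \eqref{normalization}.

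Substituting $xv'(x)=-\alpha+4Nsx^{4}+2N(1-s)x^{2}$ into \eqref{normalization} yields
\begin{equation*}
\int_{-b}^{b}\frac{-\alpha+4Nsx^{4}+2N(1-s)x^{2}}{\sqrt{b^{2}-x^{2}}}\,dx=2\pi n.
\end{equation*}
Using the standard evaluations $\int_{-b}^{b}dx/\sqrt{b^{2}-x^{2}}=\pi$, $\int_{-b}^{b}x^{2}dx/\sqrt{b^{2}-x^{2}}=\pi b^{2}/2$ and $\int_{-b}^{b}x^{4}dx/\sqrt{b^{2}-x^{2}}=3\pi b^{4}/8$ (all obtained via $x=b\sin\theta$), this collapses to the algebraic relation
\begin{equation*}
\frac{3Ns}{2}\,b^{4}+N(1-s)\,b^{2}-\alpha=2n.
\end{equation*}

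For $n$ large with $N,\alpha,s$ finite, the quartic term dominates, so $b^{2}\to\infty$ like $\sqrt{n}$; balancing the leading terms gives $\tfrac{3Ns}{2}b^{4}\simeq 2n$, i.e.\ $b^{4}\simeq 4n/(3Ns)$, whence
\begin{equation*}
b^{2}\simeq \sqrt{\frac{4n}{3Ns}}=\frac{2\sqrt{3}}{3}\,N^{-\tfrac{1}{2}}s^{-\tfrac{1}{2}}n^{\tfrac{1}{2}},
\end{equation*}
which is exactly \eqref{b}. The only real subtlety is handling the $-\alpha/x$ singularity and the principal-value prescription when applying \eqref{sigmax}–\eqref{normalization}; this is a mild obstacle resolved entirely by the odd symmetry of $v'$, after which the computation is a routine leading-order balance.
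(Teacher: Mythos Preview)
Your proof is correct and follows essentially the same approach as the paper: both plug $v'(x)$ into the normalization condition \eqref{normalization} with $a=-b$, evaluate the resulting integrals to obtain the algebraic relation $\tfrac{3Ns}{2}b^{4}+N(1-s)b^{2}-\alpha=2n$, and then extract the large-$n$ leading behaviour. The only cosmetic difference is that the paper first solves the quadratic in $b^{2}$ exactly, obtaining $b^{2}=\bigl(s-1+\sqrt{(1-s)^{2}+6s(2n+\alpha)/N}\bigr)/(3s)$, before expanding for large $n$, whereas you proceed directly by dominant balance; the two are equivalent.
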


\begin{proof}
It follows from Eq. (\ref{vx}) that
\begin{equation*}
v'(x)=-{\rm p. v.}\frac{\alpha}{x}+4Nsx^{3}+2N(1-s)x.
\end{equation*}
Putting $a=-b$, we find from Eq. (\ref{normalization}) that
\begin{equation}\label{bb}
\begin{split}
2\pi n=&\lim_{\varepsilon\rightarrow0^{+}}\left(\int_{-b}^{-\varepsilon}+\int_{\varepsilon}^{b}\right)\frac{-\alpha+4Nsx^{4}+2N(1-s)x^{2}}{\sqrt{b^{2}-x^{2}}}dx\\
=&\lim_{\varepsilon\rightarrow0^{+}}\int_{\varepsilon^{2}}^{b^{2}}\frac{-\alpha+4Nsy^{2}+2N(1-s)y}{\sqrt{\left(b^{2}-y\right)y}}dy\\
=&-\alpha\pi+\frac{3\pi}{2}Nsb^{4}+\pi N(1-s) b^{2}.
\end{split}
\end{equation}
Since $b^{2}>0$, we have
\begin{equation}\label{b2}
\begin{split}
b^{2}=&\frac{s-1+\sqrt{(1-s)^{2}+6s\left(\frac{2n+\alpha}{N}\right)}}{3s}\\
\simeq&\frac{2\sqrt{3}}{3}N^{-\frac{1}{2}}s^{-\frac{1}{2}}n^{\frac{1}{2}},~~n\rightarrow\infty.
\end{split}
\end{equation}
\end{proof}

\begin{lem}\label{upperbound}
A upper bound for the recurrence coefficient $\beta_{n}(s;\alpha,N)$ with respect to the weight $w_{\alpha}(x;s,N)$ is given by
\begin{equation}\label{b}
\beta_{n}<\frac{n^{\frac{1}{2}}}{2\sqrt{Ns}}+\frac{s-1}{4s}+\frac{\alpha\Delta_{n}+\frac{N(1-s)^{2}}{4s}}{4\sqrt{Ns}}n^{-\frac{1}{2}}
+\sum_{j=2}^{\infty}\frac{b_{j}}{\sqrt{Ns}}\frac{\left[\alpha\Delta_{n}+\frac{N(1-s)^{2}}{4s}\right]^{j}}{n^{j-\frac{1}{2}}},
\end{equation}
where
\begin{equation*}
b_{j}=\frac{(-1)^{j-1}(2j-3)!}{2^{2j-1}j!(j-2)!}.
\end{equation*}
\end{lem}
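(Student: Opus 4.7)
The starting point is the identity $(\ref{eeee})$ proven in Proposition \ref{pro3.1}, namely
\begin{equation*}
\beta_{n}\!\left[\beta_{n+1}+\beta_{n}+\beta_{n-1}+\frac{1-s}{2s}\right]=\frac{n+\alpha\Delta_{n}}{4Ns}.
\end{equation*}
Since $\beta_{n+1}>0$ and $\beta_{n-1}>0$, discarding these two strictly positive terms and noting $\beta_{n}>0$ converts the equality into the quadratic inequality
\begin{equation*}
\beta_{n}^{2}+\frac{1-s}{2s}\,\beta_{n}<\frac{n+\alpha\Delta_{n}}{4Ns}.
\end{equation*}

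Next I would solve this as a quadratic in $\beta_{n}$, keeping only the root compatible with positivity, obtaining
\begin{equation*}
\beta_{n}<\frac{s-1}{4s}+\frac{1}{2}\sqrt{\frac{(1-s)^{2}}{4s^{2}}+\frac{n+\alpha\Delta_{n}}{Ns}}.
\end{equation*}
The key algebraic rearrangement is to pull the factor $n/(Ns)$ out of the radicand, giving
\begin{equation*}
\sqrt{\frac{(1-s)^{2}}{4s^{2}}+\frac{n+\alpha\Delta_{n}}{Ns}}=\frac{n^{1/2}}{\sqrt{Ns}}\sqrt{1+\frac{A_{n}}{n}},\qquad A_{n}:=\alpha\Delta_{n}+\frac{N(1-s)^{2}}{4s}.
\end{equation*}

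The last step is the binomial expansion $\sqrt{1+x}=1+\tfrac{x}{2}+\sum_{j\geq2}\binom{1/2}{j}x^{j}$, which is legitimate for $n$ sufficiently large (so that $|A_{n}/n|<1$). Substituting $x=A_{n}/n$ and multiplying through by $\tfrac{n^{1/2}}{2\sqrt{Ns}}$ produces the leading term $\tfrac{n^{1/2}}{2\sqrt{Ns}}$, the $n^{-1/2}$ term $\tfrac{A_{n}}{4\sqrt{Ns}}n^{-1/2}$, and a tail of the form $\sum_{j\geq2}\tfrac{b_{j}}{\sqrt{Ns}}\tfrac{A_{n}^{j}}{n^{j-1/2}}$ with $b_{j}=\tfrac{1}{2}\binom{1/2}{j}$.

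The only nontrivial step is identifying the coefficients. Using $\binom{1/2}{j}=\tfrac{(-1)^{j-1}(2j-3)!!}{2^{j}\,j!}$ for $j\geq2$, writing $(2j-3)!!=\tfrac{(2j-2)!}{2^{j-1}(j-1)!}$, and simplifying $\tfrac{(2j-2)!}{(j-1)!}=\tfrac{2(2j-3)!}{(j-2)!}$, I obtain
\begin{equation*}
b_{j}=\frac{1}{2}\binom{1/2}{j}=\frac{(-1)^{j-1}(2j-3)!}{2^{2j-1}\,j!\,(j-2)!},
\end{equation*}
matching the expression in the statement. The main obstacle is therefore purely notational: rewriting the half-integer binomial coefficient in the claimed factorial form. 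No analytic difficulty arises because the original inequality is an elementary consequence of dropping the positive terms $\beta_{n\pm1}$ in (\ref{eeee}), and the rest is a convergent series expansion valid for large $n$.
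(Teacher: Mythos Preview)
Your proof is correct and follows exactly the same route as the paper: drop the strictly positive terms $\beta_{n\pm1}$ from the identity $(\ref{eeee})$ to get a quadratic inequality in $\beta_{n}$, take the positive root, factor $n/(Ns)$ out of the radicand, and expand via the binomial series. Your identification $b_{j}=\tfrac{1}{2}\binom{1/2}{j}=\tfrac{(-1)^{j-1}(2j-3)!!}{2^{j+1}j!}=\tfrac{(-1)^{j-1}(2j-3)!}{2^{2j-1}j!(j-2)!}$ is precisely the computation the paper records, so there is no substantive difference between the two arguments.
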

\begin{proof}
From Eq. (\ref{eq3}), we obtain
\begin{equation*}
\beta_{n}^{2}+\frac{1-s}{2s}\beta_{n}+\beta_{n}\left(\beta_{n+1}+\beta_{n-1}\right)=\frac{n+\alpha\Delta_{n}}{4Ns},
\end{equation*}
it follows that
\begin{equation*}
\beta_{n}^{2}+\frac{1-s}{2s}\beta_{n}<\frac{n+\alpha\Delta_{n}}{4Ns},
\end{equation*}
due to $\beta_{n}=\frac{h_{n}}{h_{n-1}}>0$, where
\begin{equation*}
h_{n}(s;\alpha,N)=\int_{-\infty}^{\infty}P_{n}^2(x)w_{\alpha}(x;s,N)dx.
\end{equation*}
So
\begin{equation*}
\begin{split}
0<\beta_{n}&<-\frac{1-s}{4s}+\sqrt{\left(\frac{1-s}{4s}\right)^{2}+\frac{n+\alpha\Delta_{n}}{4Ns}}\\
&=\frac{s-1}{4s}
+\frac{n^{\frac{1}{2}}}{2\sqrt{Ns}}\left[1+\frac{\alpha\Delta_{n}+\frac{N(1-s)^{2}}{4s}}{n}\right]^{\frac{1}{2}}\\
&=\frac{n^{\frac{1}{2}}}{2\sqrt{Ns}}+\frac{s-1}{4s}+\frac{\alpha\Delta_{n}+\frac{N(1-s)^{2}}{4s}}{4\sqrt{Ns}}n^{-\frac{1}{2}}
+\sum_{j=2}^{\infty}\frac{b_{j}}{\sqrt{Ns}}\frac{\left[\alpha\Delta_{n}+\frac{N(1-s)^{2}}{4s}\right]^{j}}{n^{j-\frac{1}{2}}},
\end{split}
\end{equation*}
where
\begin{equation*}
b_{j}=\frac{(-1)^{j-1}(2j-3)!!}{2^{j+1}j!}=\frac{(-1)^{j-1}(2j-3)!}{2^{2j-1}j!(j-2)!},
\end{equation*}
which completes the proof.
\end{proof}

In the following, we provide the asymptotic expansion of $\beta_{n}(s;\alpha,N)$ in Eq. (\ref{intro2}) as $n\rightarrow\infty$ with $N$ fixed, for $s\in[0,1],\alpha>-1$.

\begin{thm}\label{thm1}
Let $n\rightarrow\infty$ with $N$ fixed, the recurrence coefficient $\beta_{n}(s;\alpha,N)$ associated with monic deformed Freud polynomials satisfying the nonlinear discrete equation (\ref{eq3}), i.e.
\begin{equation}\label{eq3.1}
\beta_{n}\left(\beta_{n+1}+\beta_{n}+\beta_{n-1}+\frac{1-s}{2s}\right)=\frac{n+\alpha\Delta_{n}}{4Ns},~~~\Delta_{n}=\frac{1-(-1)^{n}}{2},
\end{equation}
has the asymptotic expansion
\begin{equation}\label{eq3.2}
\begin{split}
\beta_{n}=&\frac{1}{2\sqrt{3s}}\sqrt{\frac{n}{N}}-\frac{1-s}{12s}+\frac{(1-s)^{2}}{48\sqrt{3}s^{\frac{3}{2}}}\sqrt{\frac{N}{n}}+\left[\frac{1}{48\sqrt{3Ns}}
-\frac{N^{\frac{3}{2}}(1-s)^{4}}{2304\sqrt{3}s^{\frac{5}{2}}}\right]n^{-\frac{3}{2}}\\
&-\frac{1-s}{288s}n^{-2}+\frac{N^{\frac{5}{2}}(1-s)^{6}-144\sqrt{N}s^{2}(1-s)^{2}}{55296\sqrt{3}s^{3}}n^{-\frac{5}{2}}+\frac{N(1-s)^{3}}{1728s^{2}}n^{-3}\\
&+\mathcal{O}\left(n^{-\frac{7}{2}}\right),~~~\text{$n$ {\rm even}},
\end{split}
\end{equation}
whilst
\begin{equation}\label{eq3.3}
\begin{split}
\beta_{n}=&\frac{1}{2\sqrt{3s}}\sqrt{\frac{n}{N}}-\frac{1-s}{12s}+\left[\frac{\sqrt{N}(1-s)^{2}}{48\sqrt{3}s^{\frac{3}{2}}}+\frac{\alpha}{4\sqrt{3Ns}}\right]\frac{1}{\sqrt{n}}
+\bigg\{-\frac{\left[N(1-s)^{2}+12s\alpha\right]^{2}}{2304\sqrt{3N}s^{\frac{5}{2}}}\\
&+\frac{1}{48\sqrt{3Ns}}\bigg\}n^{-\frac{3}{2}}-\frac{1-s}{288s}n^{-2}+\Psi(s)n^{-\frac{5}{2}}+\frac{N(1-s)^{3}+12s(1-s)\alpha}{1728s^{2}}n^{-3}\\
&+\mathcal{O}\left(n^{-\frac{7}{2}}\right),~~~\text{$n$ {\rm odd}},
\end{split}
\end{equation}
with
\begin{equation*}
\Psi(s)=\frac{N^{\frac{5}{2}}(1-s)^{6}+36N^{\frac{3}{2}}(1-s)^{4}s\alpha+144\sqrt{N}(1-s)^{2}s^{2}\left(3\alpha^{2}-1\right)+1728s^{3}\alpha\left(\alpha^{2}-1\right)}
{55296\sqrt{3}s^{\frac{7}{2}}}.
\end{equation*}
\end{thm}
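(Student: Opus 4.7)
The plan is to substitute a formal power-series ansatz in half-integer powers of $n^{-1}$ into the discrete Painlev\'e equation (\ref{eq3.1}) and to determine the coefficients recursively. Since the right-hand side of (\ref{eq3.1}) carries the parity-dependent term $\alpha\Delta_{n}$, the even- and odd-indexed subsequences of $\beta_{n}$ do not share a single smooth expansion, so two coupled series must be propagated in parallel.

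The leading order is fixed by the dominant balance in (\ref{eq3.1}): if $\beta_{n}\sim c_{0}\sqrt{n}$ then $\beta_{n\pm1}\sim c_{0}\sqrt{n}$ as well, and the bracket on the left of (\ref{eq3.1}) contributes $3c_{0}\sqrt{n}$, forcing $3c_{0}^{2}=1/(4Ns)$ and hence $c_{0}=1/(2\sqrt{3Ns})$. This value is independent of parity and agrees both with the upper bound in Lemma \ref{upperbound} and with the Coulomb-fluid estimate in Lemma \ref{bsquare} (via $b^{2}\sim 4\beta_{n}$).

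Next, posit two formal series
\begin{equation*}
\beta_{n}^{\mathrm{even}}=\sum_{k\ge 0}A_{k}(s)\,n^{(1-k)/2},\qquad
\beta_{n}^{\mathrm{odd}}=\sum_{k\ge 0}B_{k}(s)\,n^{(1-k)/2},
\end{equation*}
with $A_{0}=B_{0}=c_{0}$. Viewing the right-hand sides as smooth functions $f(x),g(x)$, set $\beta_{2m}=f(2m)$, $\beta_{2m+1}=g(2m+1)$, and insert them into (\ref{eq3.1}) separately at $n$ even (where $\beta_{n\pm1}=g(n\pm1)$) and at $n$ odd (where $\beta_{n\pm1}=f(n\pm1)$). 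Taylor expanding the shifts,
\begin{equation*}
g(n\pm 1)=g(n)\pm g'(n)+\tfrac{1}{2}g''(n)\pm\cdots,
\end{equation*}
and similarly for $f$, and using that each derivative of $x^{(1-k)/2}$ lowers the exponent by one, the two instances of (\ref{eq3.1}) become asymptotic identities in $n^{-1/2}$. Matching coefficients of $n^{(1-j)/2}$ at every level $j\ge 1$ produces a linear system that determines $A_{j}$ and $B_{j}$ in terms of the already-known $A_{k},B_{k}$ with $k<j$. Unwinding this recursion through order $n^{-3}$ yields exactly the formulas (\ref{eq3.2}) and (\ref{eq3.3}).

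The main obstacle is the bookkeeping: reaching the $\Psi(s)$ coefficient at order $n^{-5/2}$ requires keeping roughly seven half-integer terms in each series together with the Taylor shifts to sufficient depth. A useful simplification is to decompose the unknowns into their parity-symmetric combinations $(A_{k}+B_{k})/2$ and antisymmetric combinations $(A_{k}-B_{k})/2$; the $\alpha\Delta_{n}$ forcing on the right-hand side of (\ref{eq3.1}) enters only the antisymmetric sector, so the $\alpha$-independent part of the expansion may be computed first and the $\alpha$-corrections grafted on afterwards. After this organization the algebra, though lengthy, is mechanical, and a short self-consistency check at the first few orders confirms the stated closed forms.
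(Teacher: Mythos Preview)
Your proposal is correct and follows essentially the same approach as the paper: a half-integer power-series ansatz for $\beta_{n}$, Taylor expansion of the shifts $\beta_{n\pm1}$, and coefficient matching in (\ref{eq3.1}). The only organizational difference is that the paper uses a single ansatz $\beta_{n}=\frac{n^{1/2}}{2\sqrt{3Ns}}\bigl(1+\sum_{k\ge1}d_{k}\,n^{-k/2}\bigr)$ and lets the resulting $d_{k}$ carry the parity dependence through $\Delta_{n}$, whereas you split the even/odd subsequences into two coupled series from the outset; your bookkeeping is slightly more careful, but the substance is the same.
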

\begin{proof}
In fact, Lemma \ref{bsquare} gives the first term in the asymptotic expansion of $\beta_{n}(s;\alpha,N)$, since
\begin{equation*}
\beta_{n}\simeq\left(\frac{b-a}{4}\right)^{2}\simeq\frac{1}{2\sqrt{3Ns}}n^{\frac{1}{2}},~a=-b.
\end{equation*}
Hence it seems reasonable to assume $\beta_{n}(s;\alpha,N)$ has the expansion of the form
\begin{equation}\label{eq3.8}
\beta_{n}=\frac{n^{\frac{1}{2}}}{2\sqrt{3Ns}}\left(1+\sum_{k=1}^{\infty}d_{k}n^{-\frac{k}{2}}\right),~~n\rightarrow\infty.
\end{equation}
Replacing $n$ by $n\pm1$ in Eq. (\ref{eq3.8}), we obtain
\begin{equation}\label{eq3.9}
\begin{split}
\beta_{n\pm1}=&\frac{(n\pm1)^{\frac{1}{2}}}{2\sqrt{3Ns}}\left[1+\sum_{k=1}^{\infty}d_{k}(n\pm1)^{-\frac{k}{2}}\right]\\
=&\frac{n^{\frac{1}{2}}\left(1\pm \frac{1}{n}\right)^{\frac{1}{2}}}{2\sqrt{3Ns}}\left[1+\sum_{k=1}^{\infty}d_{k}n^{-\frac{k}{2}}\left(1\pm\frac{1}{n}\right)^{-\frac{k}{2}}\right]\\
=&\frac{n^{\frac{1}{2}}}{2\sqrt{3Ns}}\bigg[1+\frac{d_{1}}{n^{\frac{1}{2}}}+\frac{2d_{2}\pm1}{2n}+\frac{d_{3}}{n^{\frac{3}{2}}}+\frac{8d_{4}\mp4d_{2}-1}{8n^{2}}
+\frac{d_{5}\mp d_{3}}{n^{\frac{5}{2}}}\\
&+\frac{16d_{6}\mp24d_{4}+6d_{2}\pm1}{16n^{3}}+\frac{d_{7}\mp2d_{5}+d_{3}}{n^{\frac{7}{2}}}+\mathcal{O}\left(n^{-4}\right)\bigg],
\end{split}
\end{equation}
by doing an asymptotic expansion in the above equation. Substituting Eqs. (\ref{eq3.8}) and (\ref{eq3.9}) into Eq. (\ref{eq3.1}), followed by comparing the corresponding coefficients on both sides, we find
\begin{equation}\label{eq3.999}
\begin{split}
&d_{1}=-\frac{\sqrt{N}(1-s)}{2\sqrt{3s}},~~d_{2}=\frac{N(1-s)^{2}}{24s}+\frac{\alpha\Delta_{n}}{2},~~d_{3}=0,\\
&d_{4}=\frac{1}{24}-\frac{\left[N(1-s)^{2}+12s\alpha\Delta_{n}\right]^{2}}{1152s^{2}},~~d_{5}=-\frac{\sqrt{N}(1-s)}{48\sqrt{3s}},\\
&d_{6}=\frac{\left[N(1-s)^{2}+12s\alpha\Delta_{n}\right]\left[N(1-s)^{2}-12s+12s\alpha\Delta_{n}\right]\left[N(1-s)^{2}+12s+12s\alpha\Delta_{n}\right]}{27648s^{3}},\\
&d_{7}=\frac{\sqrt{N}(1-s)\left[N(1-s)^{2}+12s\alpha\Delta_{n}\right]}{288\sqrt{3}s^{\frac{3}{2}}}.
\end{split}
\end{equation}
Hence, we obtain the asymptotic expansions (\ref{eq3.2}) and (\ref{eq3.3}), followed by some computation.
\end{proof}

\begin{cor}
Assume that $\beta_{n}(s;\alpha,N)$ satisfying Eq. (\ref{eq3.1}). Then for $s\in[0,1], ~\alpha>-1,~N>0$:

\noindent{\rm(\textrm{i})} the sequence $\left\{\frac{\beta_{n}(s;\alpha,N)}{\sqrt{n}}\right\}_{n=1}^{\infty}$ is bounded;

\noindent{\rm(\textrm{ii})} $\lim_{n\rightarrow\infty}\frac{\beta_{n}(s;\alpha,N)}{\sqrt{n}}=\frac{1}{2\sqrt{3Ns}}$.
\end{cor}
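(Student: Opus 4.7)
The two parts of this corollary both follow essentially for free from results already established earlier in the excerpt, so my plan is to organize these citations cleanly rather than do any fresh computation.

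For part (ii), my plan is to invoke Theorem \ref{thm1} directly. The theorem gives, in both the even and odd $n$ cases, an asymptotic expansion of the form
\begin{equation*}
\beta_n(s;\alpha,N) = \frac{1}{2\sqrt{3s}}\sqrt{\frac{n}{N}} + O(1), \qquad n\to\infty.
\end{equation*}
Dividing through by $\sqrt{n}$ gives $\beta_n/\sqrt{n} = 1/(2\sqrt{3Ns}) + O(n^{-1/2})$, and letting $n\to\infty$ immediately yields the claimed limit.

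For part (i), my plan is to read off boundedness from Lemma \ref{upperbound}. Since $\beta_n = h_n/h_{n-1} > 0$, the sequence $\beta_n/\sqrt{n}$ is bounded below by zero, so only an upper bound is needed. The explicit bound in Lemma \ref{upperbound} shows that $\beta_n$ is dominated, for $n$ large, by $n^{1/2}/(2\sqrt{Ns})$ plus lower-order terms in $n^{-1/2}, n^{-3/2},\dots$. Dividing by $\sqrt{n}$ produces a sequence bounded by $1/(2\sqrt{Ns})$ plus terms tending to zero, so $\{\beta_n/\sqrt{n}\}$ is bounded for $n$ sufficiently large. For the finitely many small values of $n$ not covered by the asymptotic argument, each individual $\beta_n/\sqrt{n}$ is just a positive number, so adjoining them preserves boundedness of the whole sequence.

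Alternatively, and perhaps more cleanly, part (i) follows from part (ii): any convergent sequence in $\mathbb{R}$ is bounded. So one could simply prove (ii) first via Theorem \ref{thm1} and then deduce (i) as a one-line consequence. I would probably present the proof in this order, since it avoids invoking Lemma \ref{upperbound} and keeps the argument self-contained relative to the asymptotic expansion just derived. There is no real obstacle here — the corollary is essentially a restatement of the leading-order content of Theorem \ref{thm1}, packaged as a qualitative assertion, and the only thing worth noting in the write-up is that the even and odd subsequences share the same leading coefficient so no splitting is required when taking the limit.
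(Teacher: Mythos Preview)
Your proposal is correct and matches the paper's treatment: the corollary is stated without proof immediately after Theorem~\ref{thm1}, so it is intended to follow directly from that asymptotic expansion, exactly as you argue. Your preferred ordering---deduce (ii) from Theorem~\ref{thm1} and then get (i) from convergence---is the cleanest presentation and is precisely what the paper's placement of the result implies.
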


\begin{rem}
Putting $s=1,~\alpha=0,~N=1$, the classical result, obtained by Lew and Quarles [\cite{Lew}] for the Freud weight ${\rm e}^{-x^{4}}$, is recovered, i.e.
\begin{equation*}
\lim_{n\rightarrow\infty}\frac{\beta_{n}(1;0,1)}{\sqrt{n}}=\frac{1}{2\sqrt{3}}.
\end{equation*}
\end{rem}

\begin{cor}\label{thm3}
For $s\in[0,1], ~\alpha>-1,~N>0$, the recurrence coefficients $\beta_{n}(s;\alpha,N)$ in Eq. (\ref{eq3.1}) satisfy
\begin{equation}\label{eq4.9}
\begin{split}
\frac{\beta_{n+1}(s;\alpha,N)}{\beta_{n}(s;\alpha,N)}&=1+\mathcal{O}(n^{-1}),~~~~\text{$n\rightarrow\infty$},\\
\frac{\beta_{n}(s;\alpha,N)}{a_{\mu}^{2}}&=\frac{1}{4}+\mathcal{O}(n^{-1}),~~~~\text{$n\rightarrow\infty$}.
\end{split}
\end{equation}
where $a_{\mu}$ is sometimes called the Mhaskar-Rakhmanov-Saff number [\cite{C7,C6}], the unique positive root of the equation
\begin{equation*}
\mu=\frac{2}{\pi}\int_{0}^{1}\frac{a_{\mu} y Q'(a_{\mu}y)}{\sqrt{1-y^{2}}}dy,
\end{equation*}
with $Q(x)=Nsx^{4}+N(1-s)x^{2}$.
\end{cor}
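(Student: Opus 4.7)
The plan is to derive both statements directly from the asymptotic expansion of $\beta_n$ in Theorem~\ref{thm1}, combined with the explicit formula for $b^2$ in Lemma~\ref{bsquare}. Both parts reduce to matching leading coefficients of known expansions, with no new estimates needed.

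For part (\textrm{i}), I would observe that the first two terms of the expansions (\ref{eq3.2}) and (\ref{eq3.3}) are parity-independent: in both cases
\begin{equation*}
\beta_n = \frac{\sqrt{n}}{2\sqrt{3Ns}} - \frac{1-s}{12s} + O(n^{-1/2}).
\end{equation*}
Consequently $\beta_{n+1} - \beta_n = \frac{1}{2\sqrt{3Ns}}\bigl(\sqrt{n+1} - \sqrt{n}\bigr) + O(n^{-1/2}) = O(n^{-1/2})$, and dividing by $\beta_n \sim \frac{1}{2\sqrt{3Ns}}\sqrt{n}$ immediately yields $\beta_{n+1}/\beta_n = 1 + O(n^{-1})$.

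For part (\textrm{ii}), I would first reduce the defining integral for $a_\mu$. With $Q(x) = Nsx^4 + N(1-s)x^2$, so that $Q'(x) = 4Nsx^3 + 2N(1-s)x$, one has $a_\mu y\, Q'(a_\mu y) = 4Nsa_\mu^4 y^4 + 2N(1-s)a_\mu^2 y^2$. Applying the standard integrals $\int_0^1 y^2/\sqrt{1-y^2}\,dy = \pi/4$ and $\int_0^1 y^4/\sqrt{1-y^2}\,dy = 3\pi/16$ reduces the defining equation to
\begin{equation*}
\mu = \tfrac{3}{2}Nsa_\mu^4 + N(1-s)a_\mu^2,
\end{equation*}
which is precisely equation (\ref{bb}) of Lemma~\ref{bsquare} under the identification $\mu = 2n + \alpha$. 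Hence $a_\mu^2 = b^2$, and expanding the closed form (\ref{b2}) for large $n$ gives $a_\mu^2 = 2\sqrt{n/(3Ns)} - (1-s)/(3s) + O(n^{-1/2})$.

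The crucial observation is that these two leading terms of $a_\mu^2$ are \emph{exactly} four times the corresponding terms of $\beta_n$ above (note $\tfrac{2}{\sqrt{3Ns}} = 4\cdot\tfrac{1}{2\sqrt{3Ns}}$ and $-\tfrac{1-s}{3s} = 4\cdot(-\tfrac{1-s}{12s})$). Therefore $a_\mu^2 - 4\beta_n = O(n^{-1/2})$, and dividing by $a_\mu^2 = O(\sqrt{n})$ yields $\beta_n/a_\mu^2 = 1/4 + O(n^{-1})$. The only substantive issue is pinning down the identification $\mu = 2n + \alpha$, which the corollary leaves implicit; beyond this bookkeeping, the proof is routine and presents no significant obstacle.
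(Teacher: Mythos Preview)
Your argument is correct and entirely self-contained, whereas the paper does not actually prove this corollary: its proof reads, in full, ``See [\cite{C5}, Thm.~2.1].'' So the two approaches are quite different. The paper defers to Damelin's general theorem on recurrence coefficients for exponential weights, while you extract both statements directly from the explicit expansion in Theorem~\ref{thm1} and the closed form~(\ref{b2}) for $b^2$. Your route has the advantage of being internal to the paper and of making the mechanism transparent---the ratio $\beta_n/a_\mu^2$ equals $1/4$ precisely because the first \emph{two} terms of the expansions match up to the factor~$4$, so the $O(n^{-1})$ error rate follows automatically rather than merely $o(1)$.

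Your flagging of the identification $\mu=2n+\alpha$ is well taken: the corollary as stated never specifies~$\mu$, and with the na\"ive choice $\mu=n$ one would obtain $\beta_n/a_\mu^2\to 1/(2\sqrt{2})$, not $1/4$. Matching with Lemma~\ref{bsquare} forces $\mu=2n+\alpha$ (equivalently $a_\mu=b$), and you handle this correctly. The citation-based proof in the paper sidesteps this bookkeeping entirely, at the cost of leaving the reader to unravel what $\mu$ is meant to be.
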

\begin{proof}
See [\cite{C5}, Thm. 2.1].
\end{proof}

\section{Asymptotics for the recurrence coefficient $\beta_{n}(s;\alpha,N)$ as $n\rightarrow\infty, N\rightarrow\infty$ with $\frac{n}{N}$ fixed.}
In Section 2 we have derived that the recurrence coefficient $\beta_{n}(s;\alpha,N)$ satisfies the second order ODE (\ref{eq2.5}). Let $0<r_{0}\leq r:=\frac{n}{N}\leq 1$, i.e. the quantity $\frac{n}{N}$ is bounded away from $0$ and close to $1$. Substituting $n=Nr$ into Eq. (\ref{eq2.5}) gives,
\begin{equation}\label{eq2.55}
\begin{split}
&\bigg[\frac{3(1+s)^{2}}{8s^{2}}\beta_{n}^{3}(s)+\frac{(1+s)^{2}(1-s)}{4s^{3}}\beta_{n}^{2}(s)+\frac{(1+s)^{2}(1-2s+2rs+s^{2})}{32s^{4}}\beta_{n}(s)\\
&-\frac{(1+s)^{2}r^{2}}{128s^{4}\beta_{n}(s)}\bigg]N^{2}+\left[\frac{\alpha(1+s)^{2}}{8s^{3}}\beta_{n}(s)\right]N-\beta_{n}''(s)+\frac{\beta_{n}'^{2}(s)}{2\beta_{n}(s)}
-\frac{2+s}{s(1+s)}\beta_{n}'(s)\\
&-\frac{3-s}{8s^{2}(1+s)}\beta_{n}(s)=0,~~~\text{$n$ is even},
\end{split}
\end{equation}
and
\begin{equation}\label{eq2.555}
\begin{split}
&\bigg[\frac{3(1+s)^{2}}{8s^{2}}\beta_{n}^{3}(s)+\frac{(1+s)^{2}(1-s)}{4s^{3}}\beta_{n}^{2}(s)+\frac{(1+s)^{2}(1-2s+2rs+s^{2})}{32s^{4}}\beta_{n}(s)\\
&-\frac{(1+s)^{2}r^{2}}{128s^{4}\beta_{n}(s)}\bigg]N^{2}-\left[\frac{\alpha(1+s)^{2}}{16s^{3}}\beta_{n}(s)+\frac{r\alpha(1+s)^{2}}{64s^{4}\beta_{n}(s)}\right]N-\beta_{n}''(s)
+\frac{\beta_{n}'^{2}(s)}{2\beta_{n}(s)}\\
&-\frac{2+s}{s(1+s)}\beta_{n}'(s)
-\frac{3-s}{8s^{2}(1+s)}\beta_{n}(s)-\frac{\alpha^{2}(1+s)^{2}}{128s^{4}\beta_{n}(s)}=0,~~~\text{$n$ is odd},
\end{split}
\end{equation}
which imply that
\begin{equation}\label{eq2.5555}
\begin{split}
&\frac{3(1+s)^{2}}{8s^{2}}\beta_{n}^{3}(s)+\frac{(1+s)^{2}(1-s)}{4s^{3}}\beta_{n}^{2}(s)+\frac{(1+s)^{2}(1-2s+2rs+s^{2})}{32s^{4}}\beta_{n}(s)\\
&-\frac{(1+s)^{2}r^{2}}{128s^{4}\beta_{n}(s)}=0, ~~~\text{$N\rightarrow\infty$}.
\end{split}
\end{equation}
Solving this equation, we get the solution
\begin{equation}\label{eq2.55555}
\begin{split}
\beta_{n}(s)=\frac{s-1+\sqrt{1-2s+12rs+s^{2}}}{12s},
\end{split}
\end{equation}
which is consistent with the result (\ref{b2}) since
\begin{equation}\label{eq2.556}
\begin{split}
\beta_{n}(s)\simeq\frac{b^{2}}{4},~N\rightarrow\infty.
\end{split}
\end{equation}
Therefore, we assume that $\beta_{n}(s)$ has the asymptotic expansion
\begin{equation}\label{eq2.557}
\begin{split}
\beta_{n}(s)= a_{0}(s)+\sum_{k=1}^{\infty}\frac{a_{k}(s)}{N^{k}},~N\rightarrow\infty.
\end{split}
\end{equation}
where
\begin{equation*}
\begin{split}
a_{0}(s):=\frac{s-1+\sqrt{1-2s+12rs+s^{2}}}{12s}.
\end{split}
\end{equation*}
Let $g(s):=\sqrt{1-2s+12rs+s^{2}}$ and $f(s):=1-2s-4rs+s^{2}$. Then substituting the asymptotic expansion Eq. (\ref{eq2.557}) into Eqs. (\ref{eq2.55}) and (\ref{eq2.555}), respectively, we obtain
\begin{equation*}
a_{1}(s)=
\begin{cases}
\frac{\left[s-1+g(s)\right]\alpha}{4(s-1)g(s)-2g^{2}(s)}, ~~~~~~~~n~{\rm  even}, \\ \\
\frac{\alpha(1-s)}{2f(s)}-\frac{4\alpha s r}{f(s)g(s)},\ \
~~~~ ~n~{\rm  odd}.
\end{cases}
\end{equation*}
and, for $n$ is even
\begin{equation*}
\begin{split}
a_{2}(s)=\frac{s(s-1)}{2g^{4}(s)}+\frac{s\left(4-3\alpha^{2}\right)}{8g^{3}(s)}+\frac{3\alpha^{2}s(s-1)}{2f^{2}(s)}+\frac{15\alpha^{2}s+3\alpha^{2}s^{2}(12r+5s-10)}{8f(s)g(s)},
\end{split}
\end{equation*}
else, if $n$ is odd,
\begin{equation*}
\begin{split}
a_{2}(s)=\frac{s(s-1)}{2g^{4}(s)}+\frac{s(4-3\alpha^{2})}{8g^{3}(s)}+\frac{\alpha^{2}s(1-s)}{2f^{2}(s)}-\frac{3\alpha^{2}s}{8f(s)g(s)}.
\end{split}
\end{equation*}
\begin{thm}\label{nN}
As $n, N\rightarrow\infty$, with the quantity $\frac{n}{N}$ is bounded away from $0$ and close to $1$, the recurrence coefficients have the asymptotic expansion
\begin{equation*}
\begin{split}
\beta_{n}(s)=&\frac{s-1+g(s)}{12s}+\frac{\left[s-1+g(s)\right]\alpha}{4(s-1)g(s)-2g^{2}(s)}\frac{1}{N}+\bigg[\frac{s(s-1)}{2g^{4}(s)}+\frac{s\left(4-3\alpha^{2}\right)}{8g^{3}(s)}
\\
&+\frac{3\alpha^{2}s(s-1)}{2f^{2}(s)}+\frac{15\alpha^{2}s+3\alpha^{2}s^{2}(12r+5s-10)}{8f(s)g(s)}\bigg]\frac{1}{N^{2}}+\mathcal{O}\left(N^{-3}\right),~\text{$n$ {\rm even}},
\end{split}
\end{equation*}
whilst
\begin{equation*}
\begin{split}
\beta_{n}(s)=&\frac{s-1+g(s)}{12s}+\left[\frac{\alpha(1-s)}{2f(s)}-\frac{4\alpha s r}{f(s)g(s)}\right]\frac{1}{N}+\bigg[\frac{s(s-1)}{2g^{4}(s)}+\frac{s(4-3\alpha^{2})}{8g^{3}(s)}\\
&+\frac{\alpha^{2}s(1-s)}{2f^{2}(s)}-\frac{3\alpha^{2}s}{8f(s)g(s)}\bigg]\frac{1}{N^{2}}+\mathcal{O}\left(N^{-3}\right),~~~~~~~~~~~~~~~~~~~~~~~~~~\text{$n$ {\rm odd}},
\end{split}
\end{equation*}
with
\begin{equation*}
\begin{split}
g(s)=\sqrt{1-2s+12rs+s^{2}}~~~{\rm and}~~~f(s)=1-2s-4rs+s^{2}.
\end{split}
\end{equation*}
\end{thm}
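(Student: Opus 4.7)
The plan is to substitute the postulated asymptotic expansion (\ref{eq2.557}) into the ODEs (\ref{eq2.55}) for even $n$ and (\ref{eq2.555}) for odd $n$, and match coefficients power by power in $1/N$. At leading order $N^2$, both ODEs reduce to the purely algebraic equation (\ref{eq2.5555}). Multiplying through by $\beta_n$ gives a quartic in $\beta_n$; I pick the physically relevant positive branch as $a_0(s)=(s-1+g(s))/(12s)$ with $g(s)=\sqrt{1-2s+12rs+s^{2}}$, the sign of the square root being fixed by the consistency requirement $a_0(s)\simeq b^{2}/4$ from Lemma \ref{bsquare} and equation (\ref{eq2.556}).

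Next I would plug $\beta_n = a_0 + a_1/N + a_2/N^{2}+\cdots$ into (\ref{eq2.55}) and (\ref{eq2.555}), expand each term to the required order, and collect the $N^{1}$ contribution. At this order only the bracketed $N^{2}$-weight terms (linearised around $a_0$) and the explicit $N$-weight terms of the ODE contribute. For even $n$, the linear equation in $a_1$ has coefficient $[\text{derivative of the }N^2\text{ bracket at }a_0]$, which factors as a multiple of $4(s-1)g(s)-2g^{2}(s)$, and right-hand side proportional to $\alpha a_0$, immediately giving the stated $a_1$. For odd $n$, the extra $r\alpha(1+s)^{2}/(64s^{4}\beta_n)$ term in (\ref{eq2.555}) contributes $-4\alpha s r/(f(s)g(s))$, producing the second piece of the odd-$n$ formula, with $f(s)=1-2s-4rs+s^{2}$ emerging naturally from the same linearisation after simplification.

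The coefficient $a_2(s)$ is then read off from the $N^{0}$ level. Here the derivative block $\beta_n''$, $\beta_n'^{2}/(2\beta_n)$, $(2+s)\beta_n'/(s(1+s))$, $(3-s)\beta_n/(8s^{2}(1+s))$ enters for the first time, evaluated on $a_0(s)$; the needed derivatives $a_0'$, $a_0''$ are compact rational expressions in $g(s)$ via $g'(s)=(s-1+6r)/g(s)$. One must also include the quadratic-in-$a_1$ contribution that comes from expanding the $N^{2}$-bracket through second order, and, for odd $n$, the standalone $-\alpha^{2}(1+s)^{2}/(128s^{4}\beta_n)$ term. The resulting linear equation for $a_2(s)$ gives the two cases in the statement; the case split reflects the different $N^{1}$- and $N^{0}$-weight terms in (\ref{eq2.55}) versus (\ref{eq2.555}).

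The only real obstacle is bookkeeping. One has to expand $1/\beta_n = 1/a_0 - a_1/(a_0^{2}N) + (a_1^{2}-a_0 a_2)/(a_0^{3}N^{2}) + \cdots$ consistently, keep track of which of the $g(s)$ and $f(s)$ denominators arise from $\partial_{\beta_n}$ of the leading quartic at $\beta_n=a_0$, and verify that all cross terms collapse to the compact rational forms in $g$ and $f$ displayed in the theorem. Since the ansatz is a formal power series in $1/N$, the construction proceeds algorithmically to arbitrary order, and the error $\mathcal{O}(N^{-3})$ is understood in the sense of the formal expansion.
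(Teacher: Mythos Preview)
Your proposal is correct and follows essentially the same approach as the paper: substitute the formal series $\beta_n=a_0+\sum_{k\ge1}a_k/N^k$ into the ODE (\ref{eq2.5}) rewritten in the form (\ref{eq2.55})--(\ref{eq2.555}), solve the leading algebraic equation (\ref{eq2.5555}) for $a_0$, and then determine $a_1,a_2$ by matching successive powers of $N$. The paper gives less detail on the bookkeeping (expansion of $1/\beta_n$, the role of the derivative terms at order $N^0$, etc.), but the method is identical.
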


\begin{rem}
Higher order corrections can be calculated systematically in Mathematica, but will be quite cumbersome. Hence we only give the expansion terms up to order $N^{-2}$.
\end{rem}

\section{The deformed Freud polynomials}
 Based on an extension of the ladder operators technique developed in [\cite{ladder operator}], Chen and Feigin [\cite{ChenFeigin}] studied the weight $\widetilde{w}(x)|x-t|^{K}, x,t,K\in\mathbb{R}$, for any smooth reference weight $\widetilde{w}(x)$. They showed that when $\widetilde{w}(x)$ is the Gaussian (Hermite) weight (${\rm e}^{-x^{2}}, x\in\mathbb{R}$), the recurrence coefficients satisfy a particular two-parameter Painlev\'{e} IV equation. Filipuk \textit{et al} [\cite{Filipuk}] found that the recurrence coefficients for the Freud weight $|x|^{2\alpha+1}{\rm e}^{-x^{4}+tx^{2}}, x,t\in\mathbb{R}, \alpha>-1$ are related to the solutions of the Painlev\'{e} IV and the first discrete Painlev\'{e} equation. In [\cite{PAC2,PAC3}], Clarkson \textit{et al} gave a systematic study on Freud weight $|x|^{2\alpha+1}{\rm e}^{-x^{4}+tx^{2}}$, and some generalized work for [\cite{ChenFeigin}]. In the following theorem, we give the differential-difference equation (i.e., lowing operator) satisfied by the deformed Freud orthogonal polynomials associated with the weight $w_{\alpha}(x;s,N)$ given in Eq. (\ref{weight}).
\begin{lem}\label{lem5.1}
The monic orthogonal polynomials $P_{n}(z;s,\alpha,N)$ with respect to the deformed Freud weight $w_{\alpha}(x;s,N)$ on $\mathbb{R}$ satisfy the differential-differentce recurrence relation
\begin{equation}\label{differential-difference}
P'_{n}(z)=\beta_{n}(s)A_{n}(z)P_{n-1}(z)-B_{n}(z)P_{n}(z),
\end{equation}
where
\begin{equation*}
\begin{split}
&A_{n}(z):=\frac{1}{h_{n}}\int_{-\infty}^{\infty}\frac{v_{0}'(z)-v_{0}'(y)}{z-y}P^{2}_{n}(y)w(y)dy,\\
&B_{n}(z):=\frac{1}{h_{n-1}}\int_{-\infty}^{\infty}\frac{v_{0}'(z)-v_{0}'(y)}{z-y}P_{n}(y)P_{n-1}(y)w(y)dy+\frac{\alpha\left[1-(-1)^{n}\right]}{2z},
\end{split}
\end{equation*}
with
\begin{equation}\label{vy0}
v_{0}(x)=Nsx^{4}+N(1-s)x^{2},~x\in\mathbb{R},~s\in[0,1],~N>0.
\end{equation}
\end{lem}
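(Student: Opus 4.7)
The plan is to follow the ladder-operator derivation for semiclassical weights, with extra care devoted to the singular factor $|x|^\alpha$ at the origin. Writing $w(x) = |x|^\alpha {\rm e}^{-v_0(x)}$, I would begin by expanding
$$P_n'(z) = \sum_{k=0}^{n-1} c_{nk} P_k(z), \qquad c_{nk} = \frac{1}{h_k}\int_{\mathbb{R}} P_n'(x) P_k(x) w(x)\,dx,$$
and integrating by parts on $(-\infty,-\varepsilon)\cup(\varepsilon,\infty)$ before letting $\varepsilon \to 0$. The boundary terms at $\pm\infty$ vanish from the Gaussian/quartic decay of the weight; those at $\pm\varepsilon$ combine to $[(-1)^{n+k}-1]P_n(\varepsilon)P_k(\varepsilon)w(\varepsilon)$, which is identically zero when $n+k$ is even and $O(\varepsilon^{1+\alpha})\to 0$ when $n+k$ is odd (using that then one of $P_n(0),P_k(0)$ vanishes). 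Combined with the orthogonality identity $\int P_n P_k' w\,dx = 0$ for $\deg P_k' < n$, this yields
$$c_{nk} = \frac{1}{h_k}\int_{\mathbb{R}} P_n(x) P_k(x)\left[v_0'(x) - \frac{\alpha}{x}\right]w(x)\,dx,$$
the $\alpha/x$ piece interpreted as a principal value.

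Next I would sum over $k$ and apply Christoffel--Darboux to collapse $\sum_{k=0}^{n-1} P_k(x)P_k(z)/h_k = [P_n(x)P_{n-1}(z) - P_{n-1}(x)P_n(z)]/[h_{n-1}(x-z)] =: K_n(x,z)$, then split the kernel as $v_0'(x) - \alpha/x = [v_0'(x) - v_0'(z)] + [v_0'(z) - \alpha/x]$. The first piece produces $\beta_n A_n(z) P_{n-1}(z) - \widetilde B_n(z) P_n(z)$ directly, where $\beta_n = h_n/h_{n-1}$ and $\widetilde B_n$ is the integral term in the stated $B_n$. In the second piece, the $v_0'(z)$ part reduces to $v_0'(z)\int P_n(x) K_n(x,z) w(x)\,dx$, which vanishes by orthogonality of $P_n$ against polynomials of degree less than $n$.

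The main (and only nontrivial) remaining point is extracting the explicit correction $\alpha[1-(-1)^n]/(2z)$ from the $\alpha/x$ contribution. Using the partial fraction $\tfrac{1}{x(x-z)} = \tfrac{1}{z(x-z)} - \tfrac{1}{zx}$ splits that contribution into (a) a piece proportional to $\int P_n K_n w\,dx$, which again vanishes by orthogonality, and (b) the residual
$$\frac{\alpha}{z\,h_{n-1}}\left[P_{n-1}(z)\int_{\mathbb{R}} \frac{P_n^{2}(x)\,w(x)}{x}\,dx - P_n(z)\int_{\mathbb{R}} \frac{P_n(x)P_{n-1}(x)\,w(x)}{x}\,dx\right].$$
The first integral vanishes by parity (odd integrand against an even weight). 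For the second, when $n$ is even $P_{n-1}(x)/x$ is a polynomial of degree $n-2$, so the integral equals $0$ by orthogonality of $P_n$; when $n$ is odd $P_n(x)/x$ is monic of degree $n-1$, differing from $P_{n-1}(x)$ only by polynomials of degree less than $n-1$, so the integral equals $h_{n-1}$. Packaging these two cases into the factor $[1-(-1)^n]/2$ and collecting signs yields $B_n(z) = \widetilde B_n(z) + \alpha[1-(-1)^n]/(2z)$, as claimed.
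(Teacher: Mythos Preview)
Your proof is correct and follows essentially the same ladder-operator strategy as the paper: expand $P_n'$ in the orthogonal basis, integrate by parts to pick up $v_0'(y)-\alpha/y$, collapse via Christoffel--Darboux, and isolate the $\alpha/y$ contribution. Two minor differences are worth noting. First, you treat the boundary terms at the origin carefully via a $(-\varepsilon,\varepsilon)$ excision and a parity argument, which the paper leaves implicit. Second, and more interestingly, for the key integral $\tfrac{1}{h_{n-1}}\int P_n(x)P_{n-1}(x)x^{-1}w(x)\,dx$ the paper proceeds by induction on $n$ using the three-term recurrence, whereas you evaluate it directly from orthogonality: for $n$ even $P_{n-1}(x)/x$ is a polynomial of degree $n-2$, so the integral vanishes; for $n$ odd $P_n(x)/x$ is monic of degree $n-1$, so the integral is $h_{n-1}$. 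Your direct argument is a bit cleaner and avoids the recursive telescoping, but both routes yield the same result with comparable effort.
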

\begin{proof}
Since the derivative of $P_{n}(z)$ is a polynomial of degree $n-1$ in $z$, hence $P'_{n}(z)$ can be given by
\begin{equation}\label{1412-1}
P'_{n}(z)=\sum_{j=0}^{n-1}c_{n,j}P_{j}(z).
\end{equation}
Using the orthogonality relations, and integrating by parts, we have
\begin{equation}\label{1412-2}
c_{n,j}=\frac{1}{h_{j}}\int_{-\infty}^{\infty}P'_{n}(y)P_{j}(y)w(y)dy=\frac{1}{h_{j}}\int_{-\infty}^{\infty}P_{n}(y)P_{j}(y)\left[v_{0}'(y)-\frac{\alpha}{y}\right]w(y)dy,
\end{equation}
where
\begin{equation}\label{vy}
v_{0}(y):=Nsy^{4}+N(1-s)y^{2}.
\end{equation}
Substituting Eq. (\ref{1412-2}) into Eq. (\ref{1412-1}) and summating over $j$ applying the Christoffel-Darboux formula [\cite{Ismail}, Theorem 2.2.2],
\begin{equation*}
\sum_{j=0}^{n-1}\frac{P_{j}(z)P_{j}(y)}{h_{j}}=\frac{P_{n}(z)P_{n-1}(y)-P_{n}(y)P_{n-1}(z)}{(z-y)h_{n-1}},
\end{equation*}
we have
{\small\begin{equation}\label{zuichang}
\begin{split}
P'_{n}(z)=&\int_{-\infty}^{\infty}P_{n}(y)\sum_{j=0}^{n-1}\frac{P_{j}(z)P_{j}(y)}{h_{j}}\left[v_{0}'(y)-\frac{\alpha}{y}\right]w(y)dy\\
=&\int_{-\infty}^{\infty}P_{n}(y)\sum_{j=0}^{n-1}\frac{P_{j}(z)P_{j}(y)}{h_{j}}\left[v_{0}'(y)-v_{0}'(z)\right]w(y)dy\\
&-\alpha\int_{-\infty}^{\infty}\frac{P_{n}(y)}{y}\sum_{j=0}^{n-1}\frac{P_{j}(z)P_{j}(y)}{h_{j}}w(y)dy\\
=&-\frac{1}{h_{n-1}}\int_{-\infty}^{\infty}\left[P_{n}(z)P_{n}(y)P_{n-1}(y)-P^{2}_{n}(y)P_{n-1}(z)\right]\frac{v_{0}'(z)-v_{0}'(y)}{z-y}w(y)dy\\
&-\frac{\alpha}{z}\int_{-\infty}^{\infty}\frac{[(z-y)+y]P_{n}(y)}{y}\left[\sum_{j=0}^{n-1}\frac{P_{j}(z)P_{j}(y)}{h_{j}}\right]w(y)dy\\
=&\frac{P_{n-1}(z)}{h_{n-1}}\int_{-\infty}^{\infty}P^{2}_{n}(y)\frac{v_{0}'(z)-v_{0}'(y)}{z-y}w(y)dy-\frac{\alpha}{z}\frac{ P_{n}(z)}{h_{n-1}}\int_{-\infty}^{\infty}\frac{P_{n}(y)P_{n-1}(y)}{y}w(y)dy\\
&-\frac{P_{n}(z)}{h_{n-1}}\int_{-\infty}^{\infty}P_{n}(y)P_{n-1}(y)\frac{v_{0}'(z)-v'_{0}(y)}{z-y}w(y)dy.
\end{split}
\end{equation}}
Furthermore, by an inductive argument based on the three term recurrence relation Eq. (\ref{intro2}) with the initial conditions
\begin{equation}\label{initialcondition}
P_{0}(z):=1, ~~~{\rm and}~~~\beta_{0}P_{-1}(z)=0,
\end{equation}
we obtain
\begin{equation}\label{bn}
\begin{split}
\frac{1}{h_{n-1}}\int_{-\infty}^{\infty}\frac{P_{n}(y)P_{n-1}(y)}{y}w(y)dy&=\frac{1}{h_{n-1}}\int_{-\infty}^{\infty}\frac{\left[yP_{n-1}(y)-\beta_{n-1}P_{n-2}(y)\right]P_{n-1}(y)}{y}w(y)dy\\
&=\begin{cases}
1, ~~~~~~~~~~~~~~~~~~~~~~~~~~~~~~~~~~~~~~~~~~~~n=1, \\
1-\frac{1}{h_{n-2}}\int_{-\infty}^{\infty}\frac{P_{n-1}(y)P_{n-2}(y)}{y}w(y)dy,~~n\geq2,\ \
\end{cases}\\
&=\begin{cases}
0, ~~~~~~~~~~~~~~~~~~~~~~~~~~~~~~~~~~~~~~~~~~~~n=2, \\
\frac{1}{h_{n-3}}\int_{-\infty}^{\infty}\frac{P_{n-2}(y)P_{n-3}(y)}{y}w(y)dy,~~~~~~~~n\geq3,\ \
\end{cases}\\
&\ldots\\
&=\begin{cases}
0, ~~~~~~~~~~~~~~~~~~~~~~~~~~~~~~~~~~~~~~~~~~~~~n~{\rm even}, \\
1,~~~~~~~~~~~~~~~~~~~~~~~~~~~~~~~~~~~~~~~~~~~~~n~{\rm odd}.\ \
\end{cases}\\
\end{split}
\end{equation}
Therefore, (bearing in mind $\beta_{n}=h_{n}/h_{n-1}$) the result is derived by Eqs. (\ref{zuichang}) and (\ref{bn}) immediately.
\end{proof}

\begin{lem}\label{lem5.2}
$A_{n}(z)$ and $B_{n}(z)$ defined by Lemma \ref{lem5.1} satisfy the relation:
\begin{equation}\label{AnBn}
A_{n}(z)=\frac{v_{0}'(z)}{z}+\frac{B_{n}(z)+B_{n+1}(z)}{z}-\frac{\alpha}{z^{2}}.
\end{equation}
\end{lem}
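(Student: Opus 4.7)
The plan is to compute the sum $B_n(z)+B_{n+1}(z)$ directly from its definition in Lemma \ref{lem5.1} and relate it to $zA_n(z)$ by an algebraic identity $y=z-(z-y)$, isolating $A_n(z)$ after showing that one leftover integral vanishes by parity.

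First I would add the two defining integrals for $B_n(z)$ and $B_{n+1}(z)$. The boundary pieces combine cleanly as
\begin{equation*}
\frac{\alpha[1-(-1)^{n}]}{2z}+\frac{\alpha[1-(-1)^{n+1}]}{2z}=\frac{\alpha}{z}.
\end{equation*}
Inside the integral I would factor out $P_n(y)\,w(y)\,\frac{v_0'(z)-v_0'(y)}{z-y}$ and rewrite the bracket using the three-term recurrence $P_{n+1}(y)=yP_n(y)-\beta_n P_{n-1}(y)$ together with $\beta_n=h_n/h_{n-1}$, which gives the collapse
\begin{equation*}
\frac{P_{n-1}(y)}{h_{n-1}}+\frac{P_{n+1}(y)}{h_n}=\frac{yP_n(y)}{h_n}.
\end{equation*}
This reduces the sum to a single integral,
\begin{equation*}
B_n(z)+B_{n+1}(z)-\frac{\alpha}{z}=\frac{1}{h_n}\int_{-\infty}^{\infty}\frac{v_0'(z)-v_0'(y)}{z-y}\,y\,P_n^2(y)\,w(y)\,dy.
\end{equation*}

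Next I would write $y=z-(z-y)$ inside that integral and split it in two pieces. The $z$-piece recovers $zA_n(z)$ by the very definition of $A_n(z)$, while the $(z-y)$-piece cancels the denominator and produces
\begin{equation*}
-\frac{1}{h_n}\int_{-\infty}^{\infty}[v_0'(z)-v_0'(y)]\,P_n^2(y)\,w(y)\,dy = -v_0'(z)+\frac{1}{h_n}\int_{-\infty}^{\infty}v_0'(y)P_n^2(y)w(y)\,dy,
\end{equation*}
using $\int P_n^2 w\,dy=h_n$. The remaining integral vanishes by parity: with $v_0(y)=Nsy^4+N(1-s)y^2$ its derivative $v_0'(y)=4Nsy^3+2N(1-s)y$ is odd on $\mathbb{R}$, while $P_n^2(y)$ and $w(y)=|y|^\alpha e^{-v_0(y)}$ are both even, so the integrand is odd and its integral over $\mathbb{R}$ is zero. (Equivalently one can integrate by parts against $d(e^{-v_0})$ and use $\int 2P_nP_n'w\,dy=0$ together with the parity of $\int (P_n^2/y)w\,dy$ already recorded in Eq. (\ref{bn}).)

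Putting the two pieces together gives $B_n(z)+B_{n+1}(z)-\alpha/z=zA_n(z)-v_0'(z)$, which is exactly Eq. (\ref{AnBn}) after dividing through by $z$. There is no real obstacle here: the only delicate step is seeing that the sum telescopes via the three-term recurrence once the bracket $P_{n-1}/h_{n-1}+P_{n+1}/h_n$ is recognized, and then spotting that the parity of $v_0'$ against the even weight kills the leftover term.
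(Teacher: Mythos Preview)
Your proof is correct and follows essentially the same route as the paper's own argument, just run in the opposite direction: the paper starts from $A_n(z)$, inserts $z=y+(z-y)$, and uses the recurrence $yP_n=P_{n+1}+\beta_nP_{n-1}$ to produce $B_n+B_{n+1}$, whereas you start from $B_n+B_{n+1}$, collapse via the same recurrence, and insert $y=z-(z-y)$ to recover $zA_n(z)$. The vanishing of $\int v_0'(y)P_n^2(y)w(y)\,dy$ by parity that you invoke is exactly what the paper uses (implicitly, after rewriting $v_0'w=\tfrac{\alpha}{y}w-w'$), so the two arguments are equivalent.
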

\begin{proof}
Be the definition of $A_{n}(z)$, we rewrite it as
{\small\begin{equation*}
\begin{split}
A_{n}(z)=&\frac{1}{zh_{n}}\left\{\int_{-\infty}^{\infty}\frac{v_{0}'(z)-v_{0}'(y)}{z-y}yP_{n}^{2}(y)w(y)dy+\int_{-\infty}^{\infty}\left[v_{0}'(z)-v_{0}'(y)\right]P_{n}^{2}(y)w(y)dy\right\}\\
=&\frac{1}{zh_{n}}\bigg\{\int_{-\infty}^{\infty}\frac{v_{0}'(z)-v_{0}'(y)}{z-y}\left[P_{n+1}(y)+\beta_{n}P_{n-1}(y)\right]P_{n}(y)w(y)dy+v_{0}'(z)h_{n}\\
&-\int_{-\infty}^{\infty}P_{n}^{2}(y)\left[\frac{\alpha}{y}w(y)-w'(y)\right]dy\bigg\}\\
=&\frac{v_{0}'(z)}{z}+\frac{1}{z}\left\{B_{n+1}(z)-\frac{\alpha}{2z}\left[1-(-1)^{n+1}\right]+B_{n}(z)-\frac{\alpha}{2z}\left[1-(-1)^{n}\right]\right\}\\
=&\frac{v_{0}'(z)}{z}+\frac{B_{n}(z)+B_{n+1}(z)}{z}-\frac{\alpha}{z^{2}},
\end{split}
\end{equation*}}
which completes the proof.
\end{proof}

\begin{thm}\label{thm5.1}
The monic orthogonal polynomials $P_{n}(z;s,\alpha,N)$ with respect to deformed Freud weight (\ref{weight}) satisfy the linear second order ODE:
\begin{equation}\label{newnew1}
P''_{n}(z)+R_{n}(z)P'_{n}(z)+Q_{n}(z)P_{n}(z)=0,
\end{equation}
where{\small
\begin{equation}\label{newnew2}
R_{n}(z)=-4Nsz^{3}-2N(1-s)z+\frac{\alpha}{z}-\frac{2z}{z^{2}+\frac{1-s}{2s}+\beta_{n}+\beta_{n+1}},
\end{equation}
\begin{equation}\label{newnew3}
\begin{split}
Q_{n}(z)=&4Ns\beta_{n}\left[1+\alpha(-1)^{n}\right]+16N^{2}s^{2}\beta_{n}\left[\frac{1-s}{2s}+\beta_{n}+\beta_{n-1}\right]\left[\frac{1-s}{2s}+\beta_{n}+\beta_{n+1}\right]\\
&-\alpha\left[1-(-1)^{n}\right]\left[N(1-s)+\frac{1}{2z^{2}}\right]-\frac{8Ns\beta_{n}z^{2}+\alpha\left[1-(-1)^{n}\right]}{z^{2}+\frac{1-s}{2s}+\beta_{n}+\beta_{n+1}}+4nNsz^{2}.\\
\end{split}
\end{equation}}
\end{thm}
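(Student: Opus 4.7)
\textbf{Proof plan for Theorem \ref{thm5.1}.}

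The approach is the standard ladder-operator reduction: differentiate the lowering relation from Lemma \ref{lem5.1}, use a companion raising relation to eliminate $P_{n-1}'$, and then use the lowering relation itself to eliminate $P_{n-1}$. Before anything else, however, I would make $A_{n}$ and $B_{n}$ fully explicit. Since $v_{0}'(y)=4Nsy^{3}+2N(1-s)y$, the difference quotient equals $4Ns(z^{2}+zy+y^{2})+2N(1-s)$, so plugging into the definitions in Lemma \ref{lem5.1} and exploiting $\int y P_{n}^{2}w\,dy=0$, $\int y^{2}P_{n}P_{n-1}w\,dy=0$ (by parity), $\int y^{2}P_{n}^{2}w\,dy=(\beta_{n}+\beta_{n+1})h_{n}$ and $\int yP_{n}P_{n-1}w\,dy=\beta_{n}h_{n-1}$ (from \eqref{intro2}) together with the computation \eqref{bn}, one obtains
\begin{equation*}
A_{n}(z)=4Ns\!\left[z^{2}+\beta_{n}+\beta_{n+1}+\tfrac{1-s}{2s}\right],\qquad B_{n}(z)=4Ns\,\beta_{n}z+\frac{\alpha[1-(-1)^{n}]}{2z}.
\end{equation*}
A consistency check against Lemma \ref{lem5.2} confirms these formulas.

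Next I would derive the raising operator. Apply Lemma \ref{lem5.1} with index shifted to $n-1$ to get $(D+B_{n-1})P_{n-1}=\beta_{n-1}A_{n-1}P_{n-2}$; then eliminate $P_{n-2}$ via $\beta_{n-1}P_{n-2}=zP_{n-1}-P_{n}$ and invoke Lemma \ref{lem5.2} written at index $n-1$, i.e.\ $zA_{n-1}=v_{0}'(z)+B_{n-1}+B_{n}-\alpha/z$, to eliminate $B_{n-1}$. The result is
\begin{equation*}
P_{n-1}'(z)=\left[v_{0}'(z)-\tfrac{\alpha}{z}+B_{n}(z)\right]P_{n-1}(z)-A_{n-1}(z)P_{n}(z).
\end{equation*}
Now differentiate the lowering relation, substitute this raising relation for $P_{n-1}'$, and finally substitute $\beta_{n}A_{n}P_{n-1}=P_{n}'+B_{n}P_{n}$ (Lemma \ref{lem5.1}) to eliminate $P_{n-1}$ itself. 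Collecting yields \eqref{newnew1} with
\begin{equation*}
R_{n}(z)=-v_{0}'(z)+\tfrac{\alpha}{z}-\tfrac{A_{n}'(z)}{A_{n}(z)},\qquad Q_{n}(z)=B_{n}'(z)-\tfrac{A_{n}'(z)B_{n}(z)}{A_{n}(z)}-B_{n}^{2}(z)-\left[v_{0}'(z)-\tfrac{\alpha}{z}\right]\!B_{n}(z)+\beta_{n}A_{n}(z)A_{n-1}(z).
\end{equation*}
The formula \eqref{newnew2} for $R_{n}$ is then immediate upon inserting $A_{n}'(z)=8Nsz$ and the explicit $A_{n}$.

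The main obstacle will be the expansion of $Q_{n}$ into the compact form \eqref{newnew3}. Here I would substitute the explicit expressions, use the identity $[1-(-1)^{n}]^{2}/4=[1-(-1)^{n}]/2$ to combine the $1/z^{2}$ contributions, and track separately the $z^{4}$, $z^{2}$, constant, $1/z^{2}$, and rational pieces. The $z^{4}$ terms from $-v_{0}'B_{n}$ and $\beta_{n}A_{n}A_{n-1}$ cancel, and the $A_{n}'B_{n}/A_{n}$ term produces precisely the rational piece $[8Ns\beta_{n}z^{2}+\alpha(1-(-1)^{n})]/[z^{2}+(1-s)/(2s)+\beta_{n}+\beta_{n+1}]$. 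The critical step is that the net $z^{2}$ coefficient reduces to $16N^{2}s^{2}\beta_{n}\!\left[\beta_{n-1}+\beta_{n}+\beta_{n+1}+\tfrac{1-s}{2s}\right]+\bigl(\text{terms in }\alpha[1-(-1)^{n}]\bigr)$; applying the discrete Painlev\'{e} equation \eqref{eeee} converts the bracketed quantity to $(n+\alpha\Delta_{n})/(4Ns\beta_{n})$, so that the coefficient of $z^{2}$ collapses to $4nNs$ after the $\alpha\Delta_{n}$ piece cancels against $-2Ns\alpha[1-(-1)^{n}]$ coming from $v_{0}'B_{n}$. The remaining constant and $1/z^{2}$ terms reorganize into the two displayed pieces of \eqref{newnew3}, completing the proof.
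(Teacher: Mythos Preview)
Your proposal is correct and follows essentially the same route as the paper: derive the raising relation from Lemma~\ref{lem5.1} at index $n-1$ combined with Lemma~\ref{lem5.2}, differentiate the lowering relation, eliminate $P_{n-1}'$ and then $P_{n-1}$, and finally substitute the explicit $A_{n},B_{n}$ and invoke the discrete Painlev\'{e} relation \eqref{eeee} to collapse the $z^{2}$ coefficient to $4nNs$. The only cosmetic difference is that you compute $A_{n}$ and $B_{n}$ explicitly at the outset, whereas the paper first writes the abstract formulas for $R_{n},Q_{n}$ and then specializes.
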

\begin{proof}
From the differential-difference equation (\ref{differential-difference}), we have the following type of rasing operator,
\begin{equation*}
\begin{split}
P_{n-1}'(z)=&-B_{n-1}(z)P_{n-1}(z)+A_{n-1}(z)\beta_{n-1}(s)P_{n-2}(z)\\
=&-B_{n-1}(z)P_{n-1}(z)+zA_{n-1}(z)P_{n-1}(z)-A_{n-1}(z)P_{n}(z)\\
\end{split}
\end{equation*}
Using the relation between $A_{n}(z)$ and $B_{n}(z)$, see Lemma \ref{lem5.2}, we get
{\small \begin{equation}\label{useforsimplify}
\begin{split}
P_{n-1}'(z)=&-B_{n-1}(z)P_{n-1}(z)+z\left[\frac{v_{0}'(z)}{z}+\frac{B_{n-1}(z)+B_{n}(z)}{z}-\frac{\alpha}{z^{2}}\right]P_{n-1}(z)-A_{n-1}(z)P_{n}(z)\\
=&\left[v_{0}'(z)+B_{n}(z)-\frac{\alpha}{z}\right]P_{n-1}(z)-A_{n-1}(z)P_{n}(z).\\
\end{split}
\end{equation}}

Differentiating both sides of Eq. (\ref{differential-difference}) with respect to $z$ we obtain
\begin{equation*}
\begin{split}
P_{n}''(z)=&-B_{n}'(z)P_{n}(z)-B_{n}(z)P_{n}'(z)+\beta_{n}(s)A_{n}'(z)P_{n-1}(z)+\beta_{n}(s)A_{n}(z)P_{n-1}'(z).\\
\end{split}
\end{equation*}
Substituting Eq. (\ref{useforsimplify}) into above, we obtain
\begin{equation}\label{pn''}
\begin{split}
P_{n}''(z)=&-B_{n}'(z)P_{n}(z)-B_{n}(z)P_{n}'(z)-\beta_{n}(s)A_{n}(z)A_{n-1}(z)P_{n}(z)\\
&+\left\{\beta_{n}(s)A_{n}'(z)+\beta_{n}(s)A_{n}(z)\left[v_{0}'(z)+B_{n}(z)-\frac{\alpha}{z}\right]\right\}P_{n-1}(z).\\
\end{split}
\end{equation}
Eliminating $P_{n-1}(z)$ from Eqs. (\ref{differential-difference}) and (\ref{pn''}), we find
{\small\begin{equation}\label{pn'''}
\begin{split}
P_{n}''(z)+R_{n}(z)P_{n}'(z)+Q_{n}(z)P_{n}(z)=0,
\end{split}
\end{equation}}
where
\begin{equation}\label{rnqn1}
\begin{split}
R_{n}(z):=&-\frac{A_{n}'(z)}{A_{n}(z)}-v_{0}'(z)+\frac{\alpha}{z},~~~~~~~~~~~~~~~~~~~~~~~~~~~~~~~~~~~~~~~~~~~~~~~~~~~~~~~~~~\\
\end{split}
\end{equation}
\begin{equation}\label{rnqn2}
\begin{split}
Q_{n}(z):=&B_{n}'(z)-\frac{A_{n}'(z)B_{n}(z)}{A_{n}(z)}-\left[v_{0}'(z)+B_{n}(z)-\frac{\alpha}{z}\right]B_{n}(z)+\beta_{n}A_{n}(z)A_{n-1}(z).\\
\end{split}
\end{equation}
Furthermore,
{\small\begin{equation}\label{An'}
\begin{split}
A_{n}(z)=&\frac{4Ns}{h_{n}}\int_{-\infty}^{\infty}\left(z^{2}+zy+y^{2}\right)P_{n}^{2}(y)w(y)dy+\frac{2N(1-s)}{h_{n}}\int_{-\infty}^{\infty}P_{n}^{2}(y)w(y)dy\\
=&4Nsz^{2}+\frac{4Ns}{h_{n}}\int_{-\infty}^{\infty}\left[P_{n+1}(y)+\beta_{n}P_{n-1}(y)\right]\left[P_{n+1}(y)+\beta_{n}P_{n-1}(y)\right]w(y)dy\\
&+\frac{4Nsz}{h_{n}}\int_{-\infty}^{\infty}\left[P_{n+1}(y)+\beta_{n}P_{n-1}(y)\right]P_{n}(y)w(y)dy+2N(1-s)\\
=&4Nsz^{2}+4Ns\left(\beta_{n+1}+\beta_{n}\right)+2N(1-s),
\end{split}
\end{equation}}
since
\begin{equation*}
\frac{v_{0}'(z)-v_{0}'(y)}{z-y}=4Ns\left(z^{2}+zy+y^{2}\right)+2N(1-s).
\end{equation*}
Similarly, we get
\begin{equation}\label{Bn'}
B_{n}(z)=4Ns\beta_{n}z+\frac{\alpha\left[1-(-1)^{n}\right]}{2z}.
\end{equation}
Substituting Eqs. (\ref{vy}) and (\ref{An'}) into Eq. (\ref{rnqn1}), we obtain
\begin{equation*}
R_{n}(z)=-4Nsz^{3}-2N(1-s)z+\frac{\alpha}{z}-\frac{2z}{z^{2}+\frac{1-s}{2s}+\beta_{n}+\beta_{n+1}}.
\end{equation*}
Substituting Eqs. (\ref{vy}), (\ref{An'}) and (\ref{Bn'}) into Eq. (\ref{rnqn2}), we get
\begin{equation*}
\begin{split}
Q_{n}(z)=&4Ns\beta_{n}\left[1+\alpha(-1)^{n}\right]+16N^{2}s^{2}\beta_{n}\left[\frac{1-s}{2s}+\beta_{n}+\beta_{n-1}\right]\left[\frac{1-s}{2s}+\beta_{n}+\beta_{n+1}\right]\\
&-\alpha\left[1-(-1)^{n}\right]\left[N(1-s)+\frac{1}{2z^{2}}\right]-\frac{8Ns\beta_{n}z^{2}+\alpha\left[1-(-1)^{n}\right]}{z^{2}+\frac{1-s}{2s}+\beta_{n}+\beta_{n+1}}\\
&+Nsz^{2}\big\{16Ns\beta_{n}\left(\beta_{n+1}+\beta_{n}+\beta_{n-1}\right)-2\alpha\left[1-(-1)^{n}\right]+8N(1-s)\beta_{n}\big\}.
\end{split}
\end{equation*}
The proof will be completed by using Eq. (\ref{eq3.1}) to instead the last term of the above formula.
\end{proof}

Since from Theorem \ref{thm1} we have
\begin{equation*}
\beta_{n}(s)=\frac{1}{2\sqrt{3Ns}}n^{\frac{1}{2}}+\mathcal{O}(1),~~~ n\rightarrow\infty,
\end{equation*}
it follows from Eq.(\ref{newnew2}) and Eq.(\ref{newnew3}), as $n\rightarrow\infty$, that
\begin{equation*}
\begin{split}
R_{n}(z)&=-4Nsz^{3}-2N(1-s)z+\frac{\alpha}{z}+\mathcal{O}\left(n^{-\frac{1}{2}}\right),\\
Q_{n}(z)&=\left(\frac{4N^{\frac{1}{3}}s^{\frac{1}{3}}n}{3}\right)^{\frac{3}{2}}+\mathcal{O}(n).
\end{split}
\end{equation*}
Now we reconsider the Eq. (\ref{newnew1}), with denoting $P_{n}(z)$ by $\widetilde{P}_{n}(z)$ as $n\rightarrow\infty$,
\begin{equation}\label{heun1}
\widetilde{P}_{n}''(z)-\left[4Nsz^{3}+2N(1-s)z-\frac{\alpha}{z}\right]\widetilde{P}_{n}'(z)+\left[\frac{4(Ns)^{\frac{1}{3}}n}{3}\right]^{\frac{3}{2}}\widetilde{P}_{n}(z)=0.
\end{equation}

Let
\begin{equation}\label{r0}
\kappa:=(Ns)^{\frac{1}{3}}n,
\end{equation}
where $Ns\rightarrow0^{+}$, $n\rightarrow\infty$, such that $\kappa$ fixed. We find the Eq. (\ref{heun1}) is equivalent to the biconfluent Heun equation (BHE) cf. [\cite{C4'}, \S 31.12, 31.12.3]
\begin{equation}\label{heun2}
u''(x)+\left(\frac{\gamma}{x}+\delta+x\right)u'(x)+\left(\eta-\frac{\rho}{x}\right)u(x)=0
\end{equation}
through the transformation
\begin{equation*}
\widetilde{P}_{n}(z;s,\alpha)=u(x;\gamma,\delta,\eta,\rho), ~~~x=\sqrt{2}z^{2},
\end{equation*}
with parameters
\begin{equation}\label{parameters}
\gamma=-\frac{1}{2}-\frac{\alpha}{2},~~\delta=\frac{\sqrt{2}N(1-s)}{2},~~\eta=0,~~ {\rm and}~~\rho=-\frac{\sqrt{6}}{9}\kappa^{\frac{3}{2}}.
\end{equation}

\begin{rem}
 The BHE is widely encountered in contemporary mathematics and physics research. For example, many Schr\"{o}dinger equations can be solved applying the BHE, and in atomic and nuclear physics, the BHE frequently appears in studying the motion of quantum particles in $1-$, $2-$ or $3-$dimensional confinement potentials, see [\cite{Ron}]. The solutions of the BHE attract many authors, see e.g. [\cite{heun1,heun2,heun3}]. Based on [\cite{heun1}], we show that Eq. (\ref{heun2}), with parameters given in Eq. (\ref{parameters}), has the solutions in terms of the Hermite functions, reads (${\rm i}=\sqrt{-1}$)
{\small\begin{equation}\label{hermite}
u(x)=k_{1}\sum_{j=0}^{\infty}c_{j}H_{\frac{2j-\alpha-1}{2}}\left({\rm i}\left(\frac{x}{\sqrt{2}}+\frac{N(1-s)}{2}\right)\right)+
k_{2}\sum_{j=0}^{\infty}c_{j}H_{\frac{2j-\alpha-1}{2}}\left(-{\rm i}\left(\frac{x}{\sqrt{2}}+\frac{N(1-s)}{2}\right)\right)
\end{equation}}
\noindent where $k_{1}$ and $k_{2}$ are constants, whilst the coefficients $c_{n}$ satisfied a three term recurrence relation:
\begin{equation*}
L_{j}c_{j}+M_{j-1}c_{j-1}+T_{j-2}c_{j-2}=0,
\end{equation*}
with initial conditions $c_{-2}=c_{-1}=0$, and here
\begin{equation*}
\begin{split}
&L_{j}=-{\rm i}\frac{j(2j-\alpha-1)}{\sqrt{2}},\\
&M_{j}=\mp\left[\frac{9N(1-s)(2j-\alpha-1)-4\sqrt{3}\kappa^{\frac{3}{2}}}{18\sqrt{2}}\right],\\
&T_{j}=-{\rm i}\frac{2j-\alpha-1}{2\sqrt{2}},
\end{split}
\end{equation*}
where the signs $\mp$ in $M_{j}$ corresponding to $\pm{\rm i}$ in Eq. (\ref{hermite}). See details in [\cite{heun1}].
\end{rem}

\begin{rem}
Note that if we make the transformation $\widetilde{P}_{n}(z)=\widehat{P}_{n}(y)$ in Eq. (\ref{heun1}), with $y=\left(\frac{4\kappa}{3}\right)^{\frac{3}{4}}z$, then in the limit as $n\rightarrow\infty$ ($Ns\rightarrow0$, with $\kappa$ fixed), we obtain
\begin{equation*}
\widehat{P}_{n}''(y)+\frac{\alpha}{y}\widehat{P}_{n}'(y)+\widehat{P}_{n}(y)=0,
\end{equation*}
which has the solution
\begin{equation*}
\widehat{P}_{n}(y)=y^{\frac{1-\alpha}{2}}\left[c_{1}J_{\frac{1-\alpha}{2}}(y)+c_{2}Y_{\frac{1-\alpha}{2}}(y)\right],
\end{equation*}
where $J_{n}(x)$, $Y_{n}(x)$ are corresponding with the Bessel function of the first and second kind, respectively. Hence the polynomials $P_{n}(z)$ might have Mehler-Heine type asymptotic formulae.
\end{rem}

\section{Existence uniqueness of positive solutions}
In the paper [\cite{ANSA}], Alsulami \textit{et al}. discussed the existence and uniqueness of a positive solution for the nonhomogeneous nonlinear second order difference equations of the form,
\begin{equation}\label{type}
\ell_{n}=x_{n}\left(\sigma_{n,1}x_{n+1}+\sigma_{n,0}x_{n}+\sigma_{n,-1}\beta_{n-1}\right)+\kappa_{n}x_{n}
\end{equation}
with the initial conditions $x_{0}\in\mathbb{R},~ x_{1}\geq0$, whilst $\kappa_{n}\in\mathbb{R}$, $\sigma_{n,j}>0,~j\in\{0,\pm1\}$, or $\left\{\sigma_{n,0}>0, \sigma_{n,-1}\geq0,\sigma_{n,1}\geq0\right\}$. Following the results presented in [\cite{ANSA}], the solution of Eq. (\ref{eq3.1}) is obtained.

\begin{thm}
For $\alpha>-1$ and $\beta_{0}=0$, there exists a unique positive $\beta_{1}$, restricted by
\begin{equation}\label{beta1}
\beta_{1}(s;\alpha,N)=\frac{\mu_{2}(s;\alpha,N)}{\mu_{0}(s;\alpha,N)}=\frac{\mu_{0}(s;\alpha+2,N)}{\mu_{0}(s;\alpha,N)}
=\frac{\alpha+1}{2\sqrt{2Ns}}\frac{D_{-\frac{\alpha+3}{2}}\left[\frac{N(1-s)}{\sqrt{2Ns}}\right]}
{D_{-\frac{\alpha+1}{2}}\left[\frac{N(1-s)}{\sqrt{2Ns}}\right]}>0,
\end{equation}
such that the sequence $\left\{\beta_{n}(s;\alpha,N)\right\}$ satisfied by
\begin{equation}\label{eq3.111}
\beta_{n}\left(\beta_{n+1}+\beta_{n}+\beta_{n-1}+\frac{1-s}{2s}\right)=\frac{n+\alpha\Delta_{n}}{4Ns},~~~\Delta_{n}=\frac{1-(-1)^{n}}{2},
\end{equation}
is positive and the solution increases.
\end{thm}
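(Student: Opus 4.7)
The plan is to handle the theorem in two parts: first derive the explicit formula for $\beta_1$, then apply the existence-uniqueness machinery of Alsulami \textit{et al.}\ \cite{ANSA}.

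For the formula for $\beta_1$, I would use the standard identity $\beta_n = h_n/h_{n-1}$ together with $h_0 = \mu_0$ (since $P_0 \equiv 1$) and $h_1 = \mu_2$ (since evenness of the weight forces $P_1(z) = z$), which gives
\begin{equation*}
\beta_1 = \frac{\mu_2(s;\alpha,N)}{\mu_0(s;\alpha,N)} = \frac{\mu_0(s;\alpha+2,N)}{\mu_0(s;\alpha,N)},
\end{equation*}
where the second equality is the even-moment shift identity proved in the introduction. Substituting the closed-form expression for $\mu_0$ in terms of the parabolic cylinder function $D_{\nu}$ and taking the ratio, the exponential factor $\exp[N(1-s)^2/(8s)]$ cancels and the scale factor $(2Ns)^{-(\alpha+1)/4}$ contributes a single $(2Ns)^{-1/2}$; the Gamma factors simplify via $\Gamma\!\bigl(\tfrac{\alpha+3}{2}\bigr)/\Gamma\!\bigl(\tfrac{\alpha+1}{2}\bigr) = (\alpha+1)/2$, and collecting terms one recovers \eqref{beta1}. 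Positivity of this $\beta_1$ is automatic as a ratio of $L^2$ norms squared against a positive weight.

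For positivity and monotonicity of the full sequence, I would cast \eqref{eq3.111} into the template \eqref{type} of \cite{ANSA}. Setting $x_n = \beta_n$, the match is
\begin{equation*}
\sigma_{n,1} = \sigma_{n,0} = \sigma_{n,-1} = 1, \qquad \kappa_n = \frac{1-s}{2s}, \qquad \ell_n = \frac{n + \alpha\Delta_n}{4Ns}.
\end{equation*}
For $s \in (0,1]$, $N > 0$ and $\alpha > -1$, the coefficients $\sigma_{n,j}$ are strictly positive and $\kappa_n \geq 0$; furthermore $\ell_n > 0$ for every $n \geq 1$, since for $n$ odd one has $n+\alpha > 0$ (as $\alpha > -1$), while for $n \geq 2$ even one trivially has $n > 0$. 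These are precisely the sign hypotheses under which the existence-uniqueness theorem of \cite{ANSA} applies, and invoking that theorem with initial data $\beta_0 = 0$ and the $\beta_1$ just computed yields both the positivity of $\{\beta_n\}$ and the asserted monotone increase.

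The only mildly delicate step is the bookkeeping in the first part, namely checking that the prefactors in $\mu_0(s;\alpha+2,N)/\mu_0(s;\alpha,N)$ collapse to the clean expression \eqref{beta1}; everything else reduces to verifying the sign conditions of \cite{ANSA} under our parameter restrictions, which is immediate. I do not anticipate any serious obstacle beyond these two points.
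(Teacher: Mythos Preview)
Your approach is essentially identical to the paper's: both cast \eqref{eq3.111} into the template \eqref{type} with $\sigma_{n,j}=1$, $\kappa_n=(1-s)/(2s)$, $\ell_n=(n+\alpha\Delta_n)/(4Ns)$, and then invoke the existence and uniqueness results of \cite{ANSA} (the paper cites Theorems~4.1 and~5.2 there separately). The only difference is that you spell out the derivation of the closed form \eqref{beta1} from the moment ratio and the parabolic cylinder expression for $\mu_0$, whereas the paper simply states this formula in the theorem without justifying it in the proof; your added bookkeeping is correct and fills a small gap the paper leaves implicit.
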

\begin{proof}
Obviously, Eq. (\ref{eq3.111}) is a special case of Eq. (\ref{type}) with
\begin{equation*}
\sigma_{n,j}=1,~j\in\{0,\pm1\}, ~~\kappa_{n}=\frac{1-s}{2s},~~\ell_{n}=\frac{n+\alpha\Delta_{n}}{4Ns},
\end{equation*}
where $\Delta_{n}=\frac{1-(-1)^{n}}{2}$. The existence of $\beta_{1}(s;\alpha,N)>0$ such that the sequence $\left\{\beta_{n}(s;\alpha,N)\right\}$ is positive follows directly from [\cite{ANSA}, Thm. 4.1] and the uniqueness of the solution of Eq. (\ref{eq3.111}) follows immediately from [\cite{ANSA}, Thm. 5.2].
\end{proof}
\begin{figure}[!ht]
\centering
\includegraphics[width=0.65\textwidth]{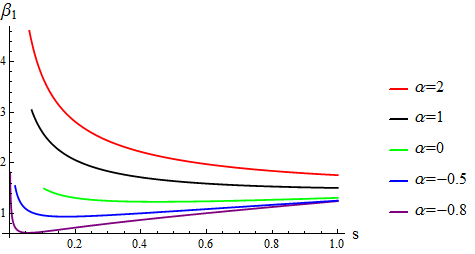}
\caption{Plots of $\beta_{1}(s;\alpha,1)$, $s\in[0,1]$ with various $\alpha$.}
\end{figure}
\begin{figure}[!ht]
\centering
\includegraphics[width=0.65\textwidth]{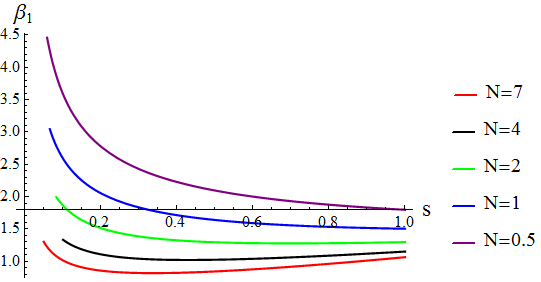}
\caption{Plots of $\beta_{1}(s;1,N)$, $s\in[0,1]$ with various $N$.}
\end{figure}

\begin{figure}[!ht]
\centering
\begin{minipage}[c]{0.33333333\textwidth}
\centering
\includegraphics[height=3.4cm,width=4.77cm]{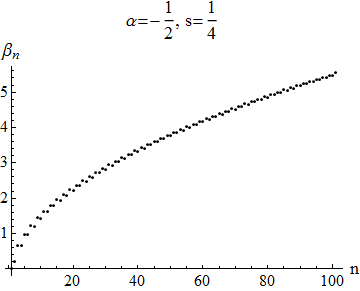}
\end{minipage}%
\begin{minipage}[c]{0.33333333\textwidth}
\centering
\includegraphics[height=3.4cm,width=4.77cm]{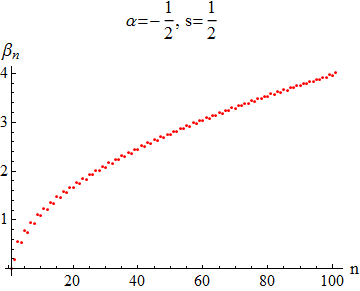}
\end{minipage}%
\begin{minipage}[c]{0.33333333\textwidth}
\centering
\includegraphics[height=3.4cm,width=4.77cm]{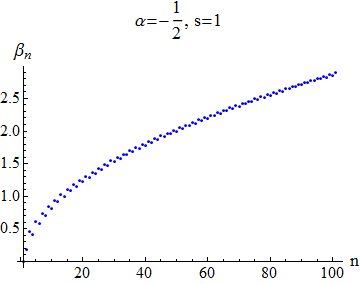}
\end{minipage}
\end{figure}
\begin{figure}[!ht]
\centering
\begin{minipage}[c]{0.33333333\textwidth}
\centering
\includegraphics[height=3.4cm,width=4.77cm]{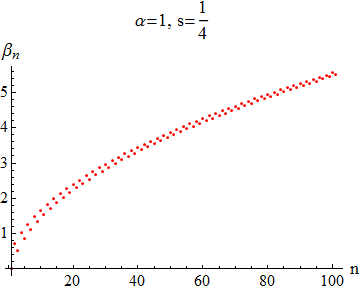}
\end{minipage}%
\begin{minipage}[c]{0.33333333\textwidth}
\centering
\includegraphics[height=3.4cm,width=4.77cm]{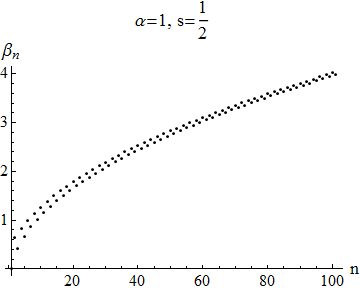}
\end{minipage}%
\begin{minipage}[c]{0.33333333\textwidth}
\centering
\includegraphics[height=3.4cm,width=4.77cm]{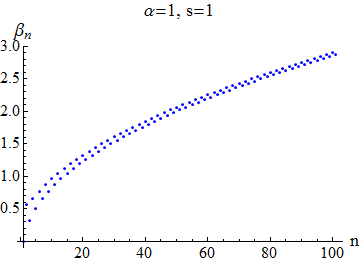}
\end{minipage}
\end{figure}
\begin{figure}[!ht]
\centering
\begin{minipage}[c]{0.33333333\textwidth}
\centering
\includegraphics[height=3.4cm,width=4.77cm]{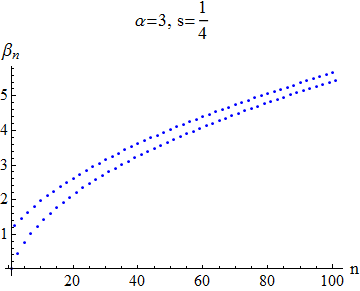}
\end{minipage}%
\begin{minipage}[c]{0.33333333\textwidth}
\centering
\includegraphics[height=3.4cm,width=4.77cm]{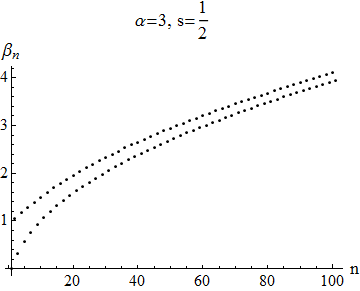}
\end{minipage}%
\begin{minipage}[c]{0.33333333\textwidth}
\centering
\includegraphics[height=3.4cm,width=4.77cm]{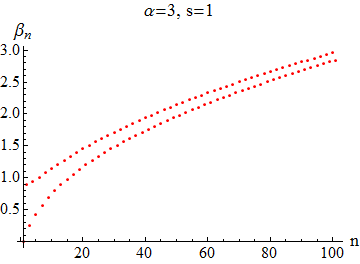}
\end{minipage}
\caption{Plots of the recurrence coefficients $\beta_{n}(s;\alpha,1)$, in given initial condition (\ref{beta1}) with various choices of $(s;\alpha)$, and $N=1$.}
\end{figure}
\begin{figure}[!ht]
\centering
\begin{minipage}[c]{0.33333333\textwidth}
\centering
\includegraphics[height=3.4cm,width=4.77cm]{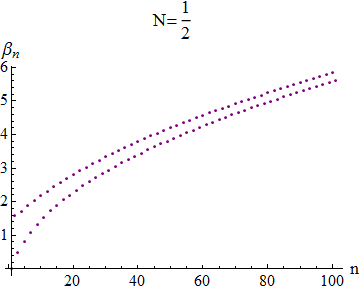}
\end{minipage}%
\begin{minipage}[c]{0.33333333\textwidth}
\centering
\includegraphics[height=3.4cm,width=4.77cm]{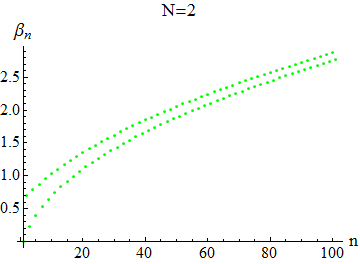}
\end{minipage}%
\begin{minipage}[c]{0.33333333\textwidth}
\centering
\includegraphics[height=3.4cm,width=4.77cm]{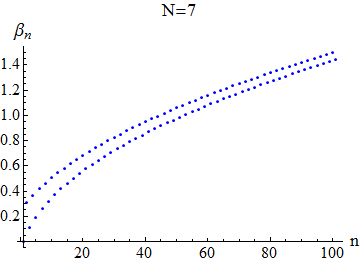}
\end{minipage}
\caption{Plots of the recurrence coefficients $\beta_{n}\left(\frac{1}{2};3,N\right)$, in given initial condition (\ref{beta1}) with various choices of $N$, and $s=\frac{1}{2},\alpha=3$.}
\end{figure}

\begin{figure}[!ht]
\centering
\begin{minipage}[c]{0.5\textwidth}
\centering
\includegraphics[height=4.5cm,width=6.7cm]{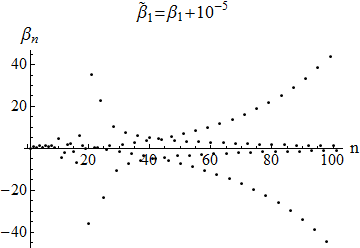}
\end{minipage}%
\begin{minipage}[c]{0.5\textwidth}
\centering
\includegraphics[height=4.5cm,width=6.7cm]{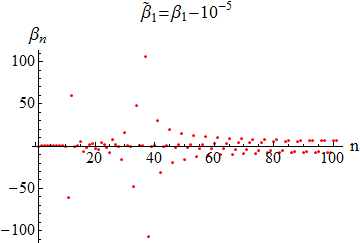}
\end{minipage}
\end{figure}
\begin{figure}[!ht]
\centering
\begin{minipage}[c]{0.5\textwidth}
\centering
\includegraphics[height=4.5cm,width=6.7cm]{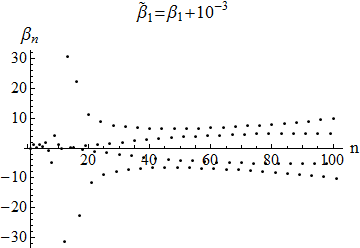}
\end{minipage}%
\begin{minipage}[c]{0.5\textwidth}
\centering
\includegraphics[height=4.5cm,width=6.7cm]{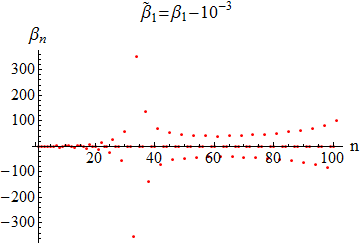}
\end{minipage}
\end{figure}
\begin{figure}[!ht]
\centering
\begin{minipage}[c]{0.5\textwidth}
\centering
\includegraphics[height=4.5cm,width=6.7cm]{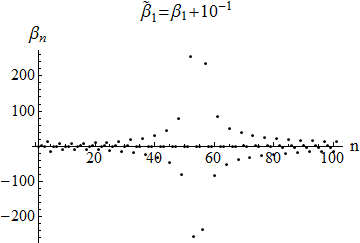}
\end{minipage}%
\begin{minipage}[c]{0.5\textwidth}
\centering
\includegraphics[height=4.5cm,width=6.7cm]{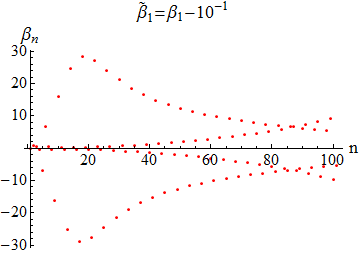}
\end{minipage}
\caption{Plots of the recurrence coefficients $\beta_{n}\left(\frac{1}{2};3,1\right)$, with initial conditions $\widetilde{\beta}_{0}\left(\frac{1}{2};3,1\right)=0,\widetilde{\beta_{1}}\left(\frac{1}{2};3,1\right)=\beta_{1}\left(\frac{1}{2};3,1\right)
\pm\varepsilon,\varepsilon\in\left\{10^{-1},10^{-3},10^{-5}\right\}$.}
\end{figure}

\begin{figure}[!ht]
\centering
\begin{minipage}[c]{0.5\textwidth}
\centering
\includegraphics[height=4.5cm,width=6.7cm]{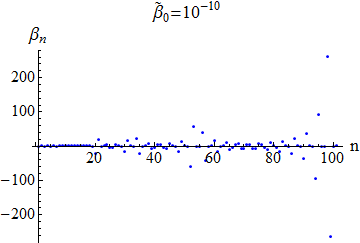}
\end{minipage}%
\begin{minipage}[c]{0.5\textwidth}
\centering
\includegraphics[height=4.5cm,width=6.7cm]{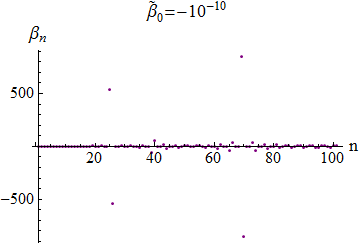}
\end{minipage}
\caption{Plots of the recurrence coefficients $\beta_{n}\left(\frac{1}{2};3,1\right)$, with initial conditions $\widetilde{\beta}_{0}\left(\frac{1}{2};3,1\right)=\pm10^{-10},\widetilde{\beta_{1}}\left(\frac{1}{2};3,1\right)=\beta_{1}\left(\frac{1}{2};3,1\right)$.}
\end{figure}

\newpage
\begin{rem}
The figures are plotted by Mathematica taking 500 digits precision.
\end{rem}

\newpage

\section{The Hankel determinant}
The well-known supplementary conditions ($S_{1}$), ($S_{2}$) and the ``sum rule'' ($S'_{2}$) for the ladder operators, satisfied by the functions $A_{n}(z)$ and $B_{n}(z)$, continue to hold by a slight modification. These are,
\begin{equation*}
~~~~~~~~~~~~~~~~~~~~~~~~B_{n+1}(z)+B_{n}(z)=zA_{n}(z)-v_{0}'(z)+\frac{\alpha}{z}, ~~~~~~~~~~~~~~~~~~~~~~~~~~~~~~~~~~~~~~~~~~~(S_{1})
\end{equation*}
\begin{equation*}
~~~~~~~~~~~~~~~~1+z\left[B_{n+1}(z)-B_{n}(z)\right]=\beta_{n+1}(s)A_{n+1}(z)-\beta_{n}(s)A_{n-1}(z), ~~~~~~~~~~~~~~~~~~~~~~~~~~(S_{2})
\end{equation*}
and
\begin{equation*}
~~~~~~~~~~~~~~B_{n}^{2}(z)+\left[v_{0}'(z)-\frac{\alpha}{z}\right]B_{n}(z)+\sum_{j=0}^{n-1}A_{j}(z)=\beta_{n}(s)A_{n}(z)A_{n-1}(z). ~~~~~~~~~~~~~~~~~~~~~~~(S'_{2})
\end{equation*}
where $v_{0}(z)=Nsz^{4}+N(1-s)z^{2},~z\in\mathbb{R},~s\in[0,1]$. $A_{n}(z)$ and $B_{n}(z)$ have been given in Eq. (\ref{An'}) and Eq. (\ref{Bn'}), which read
\begin{equation}\label{An'''}
A_{n}(z)=4Nsz^{2}+4Ns\left(\beta_{n+1}+\beta_{n}\right)+2N(1-s),
\end{equation}
and
\begin{equation}\label{Bn'''}
B_{n}(z)=4Ns\beta_{n}z+\frac{\alpha}{z}\Delta_{n},~{\rm with}~\Delta_{n}=\frac{\left[1-(-1)^{n}\right]}{2}.
\end{equation}
The three identities (S$_{1}$), (S$_{2}$) and (S$'_{2}$) were also stated in Magnus [\cite{Magnus2}]. Based on this and the formula (\ref{Dn}), we first display the relation between $D_{n}(s;\alpha,N)$ and $\beta_{n}(s;\alpha,N)$ in the next lemma.
\begin{lem}\label{lemhankel1}
The logarithmic derivative of the Hankel determinant can be expressed as
\begin{equation}\label{determinant001}
\begin{split}
\frac{d}{ds}\log D_{n}(s)=&\frac{N\left(s^{2}-1\right)(n+\alpha\Delta_{n})-2sn(n+\alpha)}{8s^{2}}+\frac{(1+s)\left(n+\alpha\Delta_{n}\right)^{2}}{32s^{2}\beta_{n}}\\
&+\frac{N^{2}\left(1-s^{2}\right)^{2}+2Ns(1+s)^{2}\left(n+2\alpha-3\alpha\Delta_{n}\right)-4s^{2}}{8s^{2}(1+s)}\beta_{n}\\
&+\frac{N^{2}(1-s^{2})}{2s}\beta_{n}^{2}+\frac{1}{2}N^{2}(1+s)\beta_{n}^{3}-\frac{2s}{1+s}\left(\frac{s{\beta_{n}'}^{2}}{\beta_{n}}+\beta_{n}'\right).
\end{split}
\end{equation}
\end{lem}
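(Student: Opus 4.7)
The starting point is the factorization $D_n(s)=\prod_{j=0}^{n-1}h_j(s)$ from Eq. (\ref{Dn}), which yields
\begin{equation*}
\frac{d}{ds}\log D_n(s)=\sum_{j=0}^{n-1}\frac{h_j'(s)}{h_j(s)}.
\end{equation*}
The third line of (\ref{pro3.2eq1}), applied to each index $j$, gives
\begin{equation*}
\frac{h_j'(s)}{h_j(s)}=\frac{N(1+s)}{2s}\bigl(\beta_{j+1}+\beta_j\bigr)-\frac{2j+1+\alpha}{4s}.
\end{equation*}
Summing from $j=0$ to $n-1$ and using $\sum_{j=0}^{n-1}(2j+1+\alpha)=n^2+n\alpha$, I obtain
\begin{equation*}
\frac{d}{ds}\log D_n(s)=\frac{N(1+s)}{2s}\,\mathcal{S}_n-\frac{n(n+\alpha)}{4s},\qquad \mathcal{S}_n:=\sum_{j=0}^{n-1}(\beta_{j+1}+\beta_j).
\end{equation*}
So the whole problem reduces to evaluating $\mathcal{S}_n$ in closed form in terms of $\beta_n$, $\beta_n'$, and constants.

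The closed form comes from the sum rule $(S_2')$. Because $A_j(z)=4Nsz^2+4Ns(\beta_{j+1}+\beta_j)+2N(1-s)$ by (\ref{An'''}), one has
\begin{equation*}
\sum_{j=0}^{n-1}A_j(z)=4nNsz^2+4Ns\,\mathcal{S}_n+2nN(1-s).
\end{equation*}
Plugging this, together with the explicit $B_n(z)$ from (\ref{Bn'''}) and $v_0'(z)=4Nsz^3+2N(1-s)z$, into $(S_2')$ produces a polynomial identity in $z$. Comparing the constant coefficient in $z$ on both sides isolates $\mathcal{S}_n$ as a polynomial in $\beta_n$, $\beta_{n+1}$, $\beta_{n-1}$ with coefficients depending on $n,\alpha,N,s$ (note $\Delta_n^2=\Delta_n$). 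Concretely, the RHS $\beta_n A_n(z)A_{n-1}(z)$ contributes the constant $\beta_n\bigl[16N^2s^2(\beta_{n+1}+\beta_n)(\beta_n+\beta_{n-1})+8N^2s(1-s)(\beta_{n+1}+2\beta_n+\beta_{n-1})+4N^2(1-s)^2\bigr]$, while the LHS contributes $8Ns\alpha\Delta_n\beta_n+2N(1-s)\alpha\Delta_n-4Ns\alpha\beta_n$ plus the sum computed above.

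To eliminate the neighbouring recurrence coefficients, I use the two structural equations already in hand: from the ${\rm dP_I}$ identity (\ref{eeee}),
\begin{equation*}
\beta_{n+1}+\beta_{n-1}=\frac{n+\alpha\Delta_n}{4Ns\beta_n}-\beta_n-\frac{1-s}{2s},
\end{equation*}
and from the differential-difference equation (\ref{eq4}),
\begin{equation*}
\beta_{n+1}-\beta_{n-1}=\frac{2s\beta_n'+\beta_n}{N(1+s)\beta_n}.
\end{equation*}
These determine $\beta_{n+1}+\beta_{n-1}$ and $\beta_{n+1}-\beta_{n-1}$, hence also the symmetric product $(\beta_{n+1}+\beta_n)(\beta_n+\beta_{n-1})$ via
$$4(\beta_{n+1}+\beta_n)(\beta_n+\beta_{n-1})=\bigl[(\beta_{n+1}+\beta_{n-1})+2\beta_n\bigr]^2-(\beta_{n+1}-\beta_{n-1})^2,$$
purely in terms of $\beta_n$ and $\beta_n'$. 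Substituting these into the expression for $\mathcal{S}_n$ yields a closed form involving $\beta_n^{\pm 1}$, $\beta_n^{2}$, $\beta_n^{3}$, $\beta_n'$ and $(\beta_n')^{2}/\beta_n$, with the $\alpha\Delta_n$ dependence entering exactly through $(n+\alpha\Delta_n)^2/\beta_n$ and lower-order terms.

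Inserting the resulting $\mathcal{S}_n$ into $\frac{N(1+s)}{2s}\mathcal{S}_n-\frac{n(n+\alpha)}{4s}$ and collecting terms by their dependence on $\beta_n$ then reproduces (\ref{determinant001}); the coefficient of $1/\beta_n$ comes from the $(n+\alpha\Delta_n)^2$ piece of $(\beta_{n+1}+\beta_{n-1})^2$, the coefficient of $\beta_n^3$ from the $\beta_n^2$ piece of the same square, and the $(\beta_n')^2/\beta_n$ and $\beta_n'$ terms come from expanding $(\beta_{n+1}-\beta_{n-1})^2$ and the cross term $(\beta_{n+1}+\beta_{n-1})(\beta_{n+1}-\beta_{n-1})$. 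The main obstacle is not any single conceptual step but the careful algebraic bookkeeping required to collect every contribution correctly and match the coefficients displayed in (\ref{determinant001}); in particular, tracking the $\Delta_n$-dependent pieces (which distinguish the even and odd $n$ contributions implicitly through $\Delta_n^2=\Delta_n$) and reconciling the $\frac{n+2\alpha-3\alpha\Delta_n}{}$-type combination in the coefficient of $\beta_n$ requires some patience.
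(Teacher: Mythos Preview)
Your proposal is correct and follows essentially the same route as the paper: differentiate $\log D_n=\sum\log h_j$, use (\ref{pro3.2eq1}) to reduce to $\mathcal S_n=\sum(\beta_{j+1}+\beta_j)$, extract $\mathcal S_n$ from the constant term of the sum rule $(S_2')$, and then eliminate $\beta_{n\pm1}$ using the ${\rm dP_I}$ relation (\ref{eeee}) together with the differential--difference equation (\ref{eq4}). The only cosmetic difference is that the paper first uses (\ref{residues2}) to remove $\beta_{n+1}$ and then invokes the ready-made formula (\ref{betan6}) for $\beta_{n-1}$, whereas you handle both neighbours symmetrically via $\beta_{n+1}\pm\beta_{n-1}$ and the identity $4(\beta_{n+1}+\beta_n)(\beta_n+\beta_{n-1})=[(\beta_{n+1}+\beta_{n-1})+2\beta_n]^2-(\beta_{n+1}-\beta_{n-1})^2$; either bookkeeping leads to (\ref{determinant001}).
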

\begin{proof}
It follows from Eq. (\ref{Dn}) that,
\begin{equation}\label{determinant002}
\log D_{n}(s)=\sum_{j=0}^{n-1}\log h_{j}(s),
\end{equation}
we find by taking the derivative of Eq. (\ref{determinant002}) with respect to $s$,
\begin{equation*}
\begin{split}
&\frac{D_{n}'(s)}{D_{n}(s)}=\sum_{j=0}^{n-1}\frac{h'_{j}(s)}{h_{j}(s)}=\frac{N(1+s)}{2s}\sum_{j=0}^{n-1}\left(\beta_{j+1}+\beta_{j}\right)-\frac{n^{2}}{4s}-\frac{n\alpha}{4s}.\\
\end{split}
\end{equation*}
Here we have used the identity (\ref{pro3.2eq1}), which gives
\begin{equation*}
\frac{h'_{n}(s)}{h_{n}(s)}=\frac{N(1+s)}{2s}\left(\beta_{n+1}+\beta_{n}\right)-\frac{2n+1}{4s}-\frac{\alpha}{4s}.
\end{equation*}
Substituting Eqs. (\ref{An'''}), (\ref{Bn'''}) and $v_{0}(x)=Nsx^{4}+N(1-s)x^{2}$ into ($S'_{2}$), we obtain
{\small\begin{equation*}
\begin{split}
&\left[4nNs+16N^{2}s^{2}\beta_{n}^{2}+4Ns\alpha\Delta_{n}+8N^{2}s(1-s)\beta_{n}\right]z^{2}+4Ns\alpha\beta_{n}(2\Delta_{n}-1)+2N(1-s)\alpha\Delta_{n}\\
&+2nN(1-s)+4Ns\sum_{j=0}^{n-1}\left(\beta_{j+1}+\beta_{j}\right)=16N^{2}s^{2}\beta_{n}\left[\beta_{n+1}+2\beta_{n}+\beta_{n-1}+\frac{1-s}{s}\right]z^{2}\\
&+16N^{2}s^{2}\beta_{n}\left[\beta_{n+1}+\beta_{n}+\frac{1-s}{2s}\right]\left[\beta_{n}+\beta_{n-1}+\frac{1-s}{2s}\right].
\end{split}
\end{equation*}}
Equating coefficients on $z$ gives
\begin{equation}\label{residues2}
\begin{split}
\beta_{n}\left(\beta_{n+1}+\beta_{n}+\beta_{n-1}+\frac{1-s}{2s}\right)=\frac{n+\alpha\Delta_{n}}{4Ns},
\end{split}
\end{equation}
\begin{equation}\label{residues1}
\begin{split}
\sum_{j=0}^{n-1}\left(\beta_{j+1}+\beta_{j}\right)=&4Ns\beta_{n}\left[\beta_{n+1}+\beta_{n}+\frac{1-s}{2s}\right]\left[\beta_{n}+\beta_{n-1}+\frac{1-s}{2s}\right]\\
&+\alpha\beta_{n}(1-2\Delta_{n})-\frac{1-s}{2s}\left(n+\alpha\Delta_{n}\right).
\end{split}
\end{equation}
Hence we have
\begin{equation}\label{determinant003}
\begin{split}
\frac{d}{ds}\log D_{n}(s)=&2N^{2}(1+s)\beta_{n}\left(\beta_{n+1}+\beta_{n}+\frac{1-s}{2s}\right)\left(\beta_{n}+\beta_{n-1}+\frac{1-s}{2s}\right)\\
&+\frac{N(1+s)}{2s}\alpha\beta_{n}(1-2\Delta_{n})-\frac{N\left(1-s^{2}\right)}{4s^{2}}(n+\alpha\Delta_{n})-\frac{n(n+\alpha)}{4s}.
\end{split}
\end{equation}
From the Eq. (\ref{residues2}), we find
\begin{equation*}
\begin{split}
\beta_{n}\left(\beta_{n+1}+\beta_{n}+\frac{1-s}{2s}\right)=\frac{n+\alpha\Delta_{n}}{4Ns}-\beta_{n}\beta_{n-1}.
\end{split}
\end{equation*}
Substituting this into Eq. (\ref{determinant003}) and eliminating $\beta_{n-1}$ by Eq. (\ref{betan6}), the result is obtained.
\end{proof}

Let $N\rightarrow\infty,~n\rightarrow\infty$ but the term $n/N$  tends to a fixed number $r$, which is bounded away from $0$ and close to $1$. If we substitute the asymptotic expansion of $\beta_{n}(s)$, given by Theorem \ref{nN}, into the right hand of Eq. (\ref{determinant001}), we find
\begin{equation}\label{2018-1-new}
\begin{split}
\frac{D_{n}'(s)}{D_{n}(s)}=A(s)N^{2}+B(s)N+C(s)+\frac{D(s)}{N}+\mathcal{O}\left(N^{-2}\right),~N\rightarrow\infty,
\end{split}
\end{equation}
where the constants $A(s),B(s),C(s)$ and $D(s)$ can be found in Appendix B.

Therefore integrating from $s=0$ to $s=1$, and substituting the constants given in Appendix B, we have
\begin{lem}\label{newlemma1}
For any $\alpha>-1$, as $n\rightarrow\infty$ with $n=rN$, $r$ fixed, we have
\begin{equation}\label{2018-1}
\begin{split}
&\log\frac{D_{n}(1)}{D_{n}(0)}=\int_{0}^{1}\frac{d}{ds}\log D_{n}(s)ds\\
&=\frac{\left(3-2\log3-2\log r\right)}{8}n^{2}+\frac{\alpha\left(-1+\log3+\log r\right)}{4}n+\frac{\left(3\alpha^{2}-1\right)\log2}{12}-\frac{\alpha^{2}\log3}{4}\\
&~~~+\frac{\alpha\left(\alpha^{2}+3\right)}{48}\frac{1}{n}+\mathcal{O}\left(n^{-2}\right),~~~\text{$n$ {\rm even}},\\
\end{split}
\end{equation}
and
\begin{equation}\label{2018-2}
\begin{split}
\log\frac{D_{n}(1)}{D_{n}(0)}=&\frac{\left(3-2\log3-2\log r\right)}{8}n^{2}-\frac{\alpha\left(-1+\log3+\log r\right)}{4}n+\frac{\left(3\alpha^{2}-1\right)\log2}{12}\\
&-\frac{\alpha^{2}\log3}{4}
+\frac{\alpha\left(\alpha^{2}-3\right)}{48}\frac{1}{n}+\mathcal{O}\left(n^{-2}\right),~~~\text{$n$ {\rm odd}}.\\
\end{split}
\end{equation}
\end{lem}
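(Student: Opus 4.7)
The plan is to integrate the asymptotic expansion (\ref{2018-1-new}) term by term from $s=0$ to $s=1$, using the elementary identity $\log\bigl(D_n(1)/D_n(0)\bigr) = \int_0^1 \tfrac{d}{ds}\log D_n(s)\,ds$. Once the coefficients $A(s), B(s), C(s), D(s)$ are integrated and $N$ is replaced by $n/r$, the claimed formulas (\ref{2018-1}) and (\ref{2018-2}) will emerge directly, with the even/odd dichotomy tracked through the $\alpha\Delta_n$ terms inherited from $a_1(s)$ and $a_2(s)$ in Theorem \ref{nN}.

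The first step is to justify (\ref{2018-1-new}) itself. I would substitute the series $\beta_n(s) = a_0(s) + a_1(s)/N + a_2(s)/N^2 + \mathcal{O}(N^{-3})$ from Theorem \ref{nN}, together with its differentiated version, into the right-hand side of Lemma \ref{lemhankel1}. The term $\tfrac12 N^2(1+s)\beta_n^3$ combined with the other $N^2$-level pieces of (\ref{determinant001}) produces $A(s)$; cross terms between $a_0$ and $a_1$ produce $B(s)$; and so on down to $D(s)/N$. The derivative $\beta_n'(s) = a_0'(s) + a_1'(s)/N + \cdots$ enters via the last term $-\tfrac{2s}{1+s}\bigl(s{\beta_n'}^2/\beta_n + \beta_n'\bigr)$ and contributes only from order $N^0$ downward. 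The resulting $A(s), B(s), C(s), D(s)$ are rational expressions in $s$, $g(s)=\sqrt{1-2s+12rs+s^2}$, and $f(s)=1-2s-4rs+s^2$, and the parity of $n$ enters only through the sign of the $\alpha\Delta_n$ factors.

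The second step is to evaluate the four integrals $\int_0^1 A(s)\,ds$, $\int_0^1 B(s)\,ds$, $\int_0^1 C(s)\,ds$, $\int_0^1 D(s)\,ds$. Completing the square gives $g(s)^2 = (s+6r-1)^2 - 12r(3r-1)$, reducing the problem to classical antiderivatives of $g(s)$, $1/g(s)$, and $1/(f(s)g(s))$. The logarithmic factors $\log 3$ and $\log r$ visible in (\ref{2018-1}) and (\ref{2018-2}) will arise from the $\operatorname{arcsinh}$-type primitive $\log\!\bigl(s+6r-1+g(s)\bigr)$ evaluated between $s=0$ and $s=1$, where $g(0)=1$ and $g(1)=\sqrt{12r}$ supply the numerical constants inside the logs. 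I would also verify at this stage that each integrand extends continuously to the endpoint $s=0$: the apparent $1/s^k$ singularities cancel against zeros in the numerators, consistent with $a_0(0)=r/2$ (obtained from the Taylor expansion $g(s)=1+(6r-1)s+\mathcal{O}(s^2)$).

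The third step is purely bookkeeping: assemble the integrated coefficients into $\bigl(\int_0^1 A\bigr) N^2 + \bigl(\int_0^1 B\bigr) N + \bigl(\int_0^1 C\bigr) + \bigl(\int_0^1 D\bigr)/N + \mathcal{O}(N^{-2})$ and substitute $N = n/r$. The $r$-dependent pieces in $\int_0^1 A(s)\,ds$ combine with $1/r^2$ to yield the stated $n^2$ coefficient $(3-2\log 3 - 2\log r)/8$, and analogous simplifications give the $n$, $n^0$, and $n^{-1}$ coefficients. The main obstacle I anticipate is neither analytic nor conceptual but purely algebraic: keeping track in parallel of the two parity cases, since each of $A,B,C,D$ contains pieces that depend on whether $\Delta_n=0$ or $\Delta_n=1$, and several of the rational integrands involving $f(s)$ and $g(s)$ require partial-fraction or substitution tricks before their primitives can be read off. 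Once these manipulations are carried out, the separation into (\ref{2018-1}) and (\ref{2018-2}) is immediate.
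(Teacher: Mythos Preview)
Your proposal is correct and follows essentially the same route as the paper: substitute the large-$N$ expansion of $\beta_n(s)$ from Theorem \ref{nN} into Lemma \ref{lemhankel1} to obtain (\ref{2018-1-new}), integrate $A(s),B(s),C(s),D(s)$ from $0$ to $1$, and replace $N$ by $n/r$. The paper simply delegates the integrals to Mathematica rather than working them by hand as you outline; one small correction is that, per the paper's Appendix B, $A(s)$ and $C(s)$ are parity-independent, so only $B(s)$ and $D(s)$ carry the even/odd distinction.
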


\begin{proof}
We calculate the integrals in Eq. (\ref{2018-1-new}) with the aid of Mathematica,
\begin{equation*}
\begin{split}
&\log\frac{D_{n}(1)}{D_{n}(0)}=\int_{0}^{1}\frac{d}{ds}\log D_{n}(s)ds\\
&=N^{2}\int_{0}^{1}A(s)ds+N\int_{0}^{1}B(s)ds+\int_{0}^{1}C(s)ds+\frac{1}{N}\int_{0}^{1}D(s)ds+\mathcal{O}\left(N^{-2}\right)\\
&=\frac{r^{2}\left(3-2\log3-2\log r\right)}{8}N^{2}+\frac{r\alpha\left(-1+\log3+\log r\right)}{4}N+\frac{\left(3\alpha^{2}-1\right)\log2}{12}-\frac{\alpha^{2}\log3}{4}\\
&~~~+\frac{\alpha\left(\alpha^{2}+3\right)}{48r}\frac{1}{N}+\mathcal{O}\left(N^{-2}\right)\\
&=\frac{\left(3-2\log3-2\log r\right)}{8}n^{2}+\frac{\alpha\left(-1+\log3+\log r\right)}{4}n+\frac{\left(3\alpha^{2}-1\right)\log2}{12}-\frac{\alpha^{2}\log3}{4}\\
&~~~+\frac{\alpha\left(\alpha^{2}+3\right)}{48}\frac{1}{n}+\mathcal{O}\left(n^{-2}\right),~~~\text{$n$ even},\\
\end{split}
\end{equation*}
Similarly, we get the result for the case $n$ is odd.
\end{proof}

Mehta and Normand [\cite{MehtaNormand}] gives
\begin{equation*}
\begin{split}
D_{n}(0)&=\frac{1}{n!}\int_{\mathbb{R}^{n}}\prod_{1\leq j<k\leq n}\left(x_{k}-x_{j}\right)^{2}\prod_{\ell=1}\left|x_{\ell}\right|^{\alpha}{\rm e}^{-Nx_{\ell}^{2}}dx_{\ell}\\
&=\frac{(2\pi)^{\frac{n}{2}}}{(2N)^{\frac{n^{2}}{2}}N^{\frac{\alpha n}{2}}\Gamma(n+1)}\prod_{j=1}^{n}\frac{\Gamma\left(\frac{\alpha+1}{2}+\left\lfloor\frac{j}{2}\right\rfloor\right)}{\Gamma\left(\frac{1}{2}+\left\lfloor\frac{j}{2}\right\rfloor\right)}j!,\\
\end{split}
\end{equation*}
where $\lfloor n\rfloor$ denotes the integer part of $n$.

Alternatively, when $n$ is even, Han and Chen [\cite{HanChen}], see also [\cite{DeanoSimm}], gives the expression of $D_{n}(0)$ as
\begin{equation}\label{2018-3}
\begin{split}
D_{n}(0)=\frac{(2\pi)^{\frac{n}{2}}}{(2N)^{\frac{n^{2}}{2}}N^{\frac{\alpha n}{2}}}\frac{G\left(\frac{3}{2}\right)G\left(\frac{1}{2}\right)}{G\left(\frac{\alpha+3}{2}\right)G\left(\frac{\alpha+1}{2}\right)}\frac{G(n+1)G\left(\frac{\alpha+n+3}{2}\right)
G\left(\frac{\alpha+n+1}{2}\right)}{G\left(\frac{n+3}{2}\right)G\left(\frac{n+1}{2}\right)},
\end{split}
\end{equation}
where $G$ is the Barnes $G-$function, that satisfies the functional relation $G(z+1)=\Gamma(z)G(z)$, with $G(0)=1$, see Voros [\cite{Voros}]. It is well-known that the Barnes G-function has the asymptotic expansion:
\begin{equation}\label{2018-5}
\begin{split}
\log G(z+1)\simeq& \frac{z^{2}}{4}+z\log \Gamma(z+1)-\left[\frac{z(z+1)}{2}+\frac{1}{12}\right]\log z-\log A\\
&+\sum_{k=1}^{\infty}\frac{\mathbb{B}_{2k+2}}{2k(2k+1)(2k+2)z^{2k}},
\end{split}
\end{equation}
as $z\rightarrow\infty$ with $|\arg z|<\pi$. Here $\mathbb{B}_{n}$ are Bernoulli numbers and
\begin{equation*}
\begin{split}
A=\exp\left(\frac{1}{12}-\zeta'(-1)\right),
\end{split}
\end{equation*}
is the Glaisher-Kinkelin constant, $A=1.2824271291\ldots$, where $\zeta(x)$ is the Riemann zeta function, see for example [\cite{C4'}, E.q. 5.17.5].
Then substituting Eq. (\ref{2018-5}) into Eq. (\ref{2018-3}), we obtain,
\begin{equation}\label{2018-7}
\begin{split}
\log D_{n}(0)=&\frac{2\log r-2\log2-3}{4}n^{2}+\left[\log(2\pi)-\frac{\alpha\left(1+\log2-\log r\right)}{2}\right]n\\
&+\frac{3\alpha^{2}-1}{12}\log n-\log A+\frac{1+6\alpha\log(2\pi)-3\alpha^{2}\log2}{12}\\
&+\log\frac{G\left(\frac{3}{2}\right)G\left(\frac{1}{2}\right)}{G\left(\frac{\alpha+3}{2}\right)G\left(\frac{\alpha+1}{2}\right)}+\frac{\alpha^{3}+\alpha}{12n}+\mathcal{O}\left(n^{-2}\right),~\text{$n$ even}.
\end{split}
\end{equation}
A similar calculation gives
\begin{equation*}
\begin{split}
D_{n}(0)=\frac{(2\pi)^{\frac{n}{2}}}{(2N)^{\frac{n^{2}}{2}}N^{\frac{\alpha n}{2}}}\frac{G\left(\frac{3}{2}\right)G\left(\frac{1}{2}\right)}{G\left(\frac{\alpha+3}{2}\right)G\left(\frac{\alpha+1}{2}\right)}\frac{G(n+1)\left[G\left(\frac{\alpha+n+2}{2}\right)\right]^{2}
}{\left[G\left(\frac{n+2}{2}\right)\right]^{2}},~~~\text{$n$ odd}.
\end{split}
\end{equation*}
which has the asymptotic expansion
\begin{equation}\label{2018-8}
\begin{split}
\log D_{n}(0)=&\frac{2\log r-2\log2-3}{4}n^{2}+\left[\log(2\pi)-\frac{\alpha\left(1+\log2-\log r\right)}{2}\right]n\\
&+\frac{3\alpha^{2}-1}{12}\log n-\log A+\frac{1+6\alpha\log(2\pi)-3\alpha^{2}\log2}{12}\\
&+\log\frac{G\left(\frac{3}{2}\right)G\left(\frac{1}{2}\right)}{G\left(\frac{\alpha+3}{2}\right)G\left(\frac{\alpha+1}{2}\right)}+\frac{\alpha^{3}-2\alpha}{12n}+\mathcal{O}\left(n^{-2}\right),~\text{$n$ odd}.
\end{split}
\end{equation}
Consequently, we get the asymptotics expansion for the Hankel determinant $D_{n}(1)$,
\begin{thm}As $r=\frac{n}{N}$ tends to $1$, we have
\begin{equation}\label{2018-9}
\begin{split}
\log D_{n}(1)=&\frac{2\log r-\log144-3}{8}n^{2}+\frac{1}{4}\left[3\alpha\log r+4\log(2\pi)-\alpha\left(3+\log\frac{4}{3}\right)\right]n\\
&+\frac{3\alpha^{2}-1}{12}\log n+\frac{1}{12}\left[1-\log2+6\alpha\log(2\pi)-3\alpha^{2}\log3\right]-\log A\\
&+\log\frac{G\left(\frac{3}{2}\right)G\left(\frac{1}{2}\right)}{G\left(\frac{\alpha+3}{2}\right)G\left(\frac{\alpha+1}{2}\right)}+\frac{\alpha\left(5\alpha^{2}+7\right)}{48}
+\mathcal{O}\left(n^{-2}\right), ~~~~~~\text{$n$  {\rm even}},
\end{split}
\end{equation}
whilst
\begin{equation}\label{2018-10}
\begin{split}
\log D_{n}(1)=&\frac{2\log r-\log144-3}{8}n^{2}+\frac{1}{4}\left[\alpha\log r+4\log(2\pi)-\alpha-\alpha\log12\right]n\\
&+\frac{3\alpha^{2}-1}{12}\log n+\frac{1}{12}\left[1-\log2+6\alpha\log(2\pi)-3\alpha^{2}\log3\right]-\log A\\
&+\log\frac{G\left(\frac{3}{2}\right)G\left(\frac{1}{2}\right)}{G\left(\frac{\alpha+3}{2}\right)G\left(\frac{\alpha+1}{2}\right)}+\frac{\alpha\left(5\alpha^{2}-11\right)}{48}
+\mathcal{O}\left(n^{-2}\right), ~~~~~~\text{$n$  {\rm odd}}.
\end{split}
\end{equation}
\end{thm}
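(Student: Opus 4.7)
The plan is to write
\begin{equation*}
\log D_{n}(1)=\log D_{n}(0)+\log\frac{D_{n}(1)}{D_{n}(0)},
\end{equation*}
and treat the two pieces separately, since each has already been analyzed earlier in the paper. The term $\log\left[D_{n}(1)/D_{n}(0)\right]$ is exactly what Lemma \ref{newlemma1} gives, namely the expansions (\ref{2018-1}) and (\ref{2018-2}) in powers of $n$, obtained by integrating the logarithmic derivative (\ref{determinant001}) against $s$ on $[0,1]$ after substituting the large-$N$ expansion from Theorem \ref{nN}. So this piece is essentially free.

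For $\log D_{n}(0)$, the starting point is the explicit closed form due to Mehta-Normand, which I would rewrite in terms of the Barnes $G$-function via the identity $G(z+1)=\Gamma(z)G(z)$: for $n$ even one has (\ref{2018-3}), and for $n$ odd the analogous expression with $G((\alpha+n+2)/2)^{2}/G((n+2)/2)^{2}$. Applying the asymptotic expansion (\ref{2018-5}) of $\log G(z+1)$ to each of the four Barnes $G$-factors in the numerator and denominator, together with Stirling for the $\log\Gamma$ pieces that appear inside, yields the expansions (\ref{2018-7}) and (\ref{2018-8}) for $\log D_{n}(0)$. The key substitution is $N=n/r$, which turns the $N$-dependent prefactor $(2N)^{-n^{2}/2}N^{-\alpha n/2}$ into the $n^{2}\log r$ and $n\log r$ contributions displayed in the theorem.

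At this stage the proof is essentially bookkeeping: I add (\ref{2018-1}) to (\ref{2018-7}) in the even case, and (\ref{2018-2}) to (\ref{2018-8}) in the odd case, and collect powers of $n$. The $n^{2}$-coefficient combines to $(2\log r-2\log 12-3)/8=(2\log r-\log 144-3)/8$; the $n$-coefficient combines to $\tfrac{1}{4}\bigl[3\alpha\log r+4\log(2\pi)-\alpha(3+\log\tfrac{4}{3})\bigr]$ for $n$ even (and the analogue with $3\alpha\log r$ replaced by $\alpha\log r$ and the sign of the $\log$ pieces adjusted, for $n$ odd); the $\log n$-coefficient $(3\alpha^{2}-1)/12$ comes entirely from $\log D_{n}(0)$, since Lemma \ref{newlemma1} produces no $\log n$ term; the constant terms rearrange via $\tfrac{(3\alpha^{2}-1)\log 2}{12}-\tfrac{\alpha^{2}\log 3}{4}+\tfrac{1+6\alpha\log(2\pi)-3\alpha^{2}\log 2}{12}=\tfrac{1-\log 2+6\alpha\log(2\pi)-3\alpha^{2}\log 3}{12}$; and the $1/n$ term adds as $\tfrac{\alpha^{3}+\alpha}{12}+\tfrac{\alpha(\alpha^{2}+3)}{48}=\tfrac{\alpha(5\alpha^{2}+7)}{48}$ in the even case and $\tfrac{\alpha^{3}-2\alpha}{12}+\tfrac{\alpha(\alpha^{2}-3)}{48}=\tfrac{\alpha(5\alpha^{2}-11)}{48}$ in the odd case, matching (\ref{2018-9}) and (\ref{2018-10}).

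The main obstacle, as far as I can see, is not conceptual but organizational: one must be careful that the expansion of $\log D_{n}(0)$ is done with $N=n/r$ \emph{before} comparing powers of $n$ (otherwise the $n^{2}\log r$ term is obscured), and the odd-$n$ case needs a separate application of (\ref{2018-5}) to the squared Barnes $G$-factors. A secondary bookkeeping point is that the $1/N$ correction in (\ref{2018-1-new}) contributes a $1/n$ term after integration, so the $\alpha(5\alpha^{2}\pm\cdot)/48$ constants really do require the $D(s)$ piece of Appendix B, not only the leading $A(s),B(s),C(s)$ data; one must verify that the $\mathcal{O}(N^{-2})$ remainder in Theorem \ref{nN} integrates to $\mathcal{O}(n^{-2})$ uniformly in $s\in[0,1]$, which is where the hypothesis that $r$ is bounded away from $0$ is used (so that $g(s)$ and $f(s)$ stay bounded away from zero).
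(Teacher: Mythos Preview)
Your proposal is correct and follows exactly the paper's own route: the authors' proof consists of the single sentence ``The result is obtained by direct computation based on Eqs.~(\ref{2018-7}) and (\ref{2018-8}) and Lemma~\ref{newlemma1},'' and your write-up is a faithful (indeed more detailed) unpacking of that computation, including the correct arithmetic for each power of $n$. Your remark about needing $r$ bounded away from $0$ to control the remainder uniformly in $s$ is a legitimate point of rigor that the paper leaves implicit.
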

\begin{proof}
The result is obtained by direct computation based on Eqs.(\ref{2018-7}) and (\ref{2018-8}) and Lemma \ref{newlemma1}.
\end{proof}

\section{Acknowledgements}
The financial support of the Macau Science and Technology Development Fund under grant
number FDCT 130/2014/A3 and FDCT 023/2017/A1 are gratefully acknowledged. We would also like
to thank the University of Macau for generous support: MYRG 2014-00011 FST, MYRG
2014-00004 FST.

\section{Appendix A}
\begin{flalign*}
{\rm dP_{I}}~~~~~~~x_{n+1}+x_{n}+x_{n-1}=\frac{z_{n}+\gamma(-1)^{n}}{x_{n}}+\delta.
\end{flalign*}

\section{Appendix B}
Here we list the coefficients of the Eq. (\ref{2018-1-new}).

For $A(s)$, letting $\widetilde{g}(s):=s-1+g(s)$, where $g(s)=\sqrt{1-2s+12rs+s^{2}}$, then we have
\begin{equation*}
\begin{split}
A(s)=&\frac{3r^{2}(1+s)}{8s\widetilde{g}(s)}-\frac{r\left(1+2rs-s^{2}\right)}{8s^{2}}+\frac{(1+s)\left(1-2s+2rs+s^{2}\right)\widetilde{g}(s)}{96s^{3}}~~~~~~~~~~~~~~\\
&+\frac{\left(1-s^{2}\right)\widetilde{g}^{2}(s)}{288s^{3}}+\frac{(1+s)\widetilde{g}^{3}(s)}{3456s^{3}},~~~\text{$n$ even or odd}.
\end{split}
\end{equation*}

For $B(s)$, defining
\begin{equation*}
\begin{split}
B_{0}(s):=&\frac{\alpha}{6s^{2}g(s)\widetilde{g}^{2}(s)}\Big\{1-g(s)+s(6r-1)(2g(s)-3)+s^{2}\left[2+\left(6r-18r^{2}\right)(g(s)-4)\right]\\
&+s^{3}(6r-2)(g(s)-1)+s^{4}(g(s)+12r-3)+s^{5}\Big\},
\end{split}
\end{equation*}
then we obtain $B(s)=-B_{0}(s)$ if $n$ is even and $B(s)=B_{0}(s)$ if $n$ is odd.

For $C(s)$, we first define
\begin{equation*}
\begin{split}
c_{0}(s):=&\frac{1+s}{6sg\widetilde{g}^{2}\left[1+2g-2s(1-6r+g)+s^{2}\right]},~~~~~~~~~~~~~~~~~~~~~~~~~~~~~~~~~~~~~~~~~~~~~~~~~~~~~~
\end{split}
\end{equation*}
\begin{equation*}
\begin{split}
c_{1}(s):=&-1 + 6 s - 3 r s - 15 s^2 + 12 r s^2 + 72 r^2 s^2 + 20 s^3 -
 18 r s^3 - 144 r^2 s^3 - 432 r^3 s^3 - 15 s^4 \\
 &+ 12 r s^4 +
 72 r^2 s^4 + 6 s^5 - 3 r s^5 - s^6,
\end{split}
\end{equation*}
\begin{equation*}
\begin{split}
c_{2}(s):=&g\left(1 - 5 s - 3 r s + 10 s^2 + 9 r s^2 - 36 r^2 s^2 - 10 s^3 -
   9 r s^3 + 36 r^2 s^3 + 5 s^4 + 3 r s^4 - s^5\right),
\end{split}
\end{equation*}
\begin{equation*}
\begin{split}
c_{3}(s):=&g\big(-3 + 15 s - 27 r s - 30 s^2 + 81 r s^2 + 108 r^2 s^2 +
   30 s^3 - 81 r s^3 - 108 r^2 s^3 - 15 s^4\\
   & + 27 r s^4 +
   3 s^5\big)\alpha^2,
\end{split}
\end{equation*}
\begin{equation*}
\begin{split}
c_{4}(s):=&\big(3 - 18 s + 45 r s + 45 s^2 - 180 r s^2 - 60 s^3 + 270 r s^3 -
   1296 r^3 s^3 + 45 s^4 - 180 r s^4 - 18 s^5 \\
   &+ 45 r s^5 +
   3 s^6\big) \alpha^2,
\end{split}
\end{equation*}
then we find that $C(s)=c_{0}(s)\left[c_{1}(s)+c_{2}(s)+c_{3}(s)+c_{4}(s)\right]$, either $n$ is even or odd.

For $D(s)$, we get $D(s):=d_{0}(s)\left[d_{1}(s)+d_{2}(s)\right]$ when $n$ is even, with
\begin{equation*}
\begin{split}
d_{0}(s):=&-\frac{(1+s)\alpha}{g\widetilde{g}^{2}\left[1+2g-2s(1-6r+g)+s^{2}\right]^{3}},~~~~~~~~~~~~~~~~~~~~~~~~~~~~~~~~~~~~~~~~~~~~~~~~~~~~~~
\end{split}
\end{equation*}
\begin{equation*}
\begin{split}
d_{1}(s):=&g\big(-3 + 15 s - 15 r s - 30 s^2 + 45 r s^2 - 36 r^2 s^2 + 30 s^3 -
 45 r s^3 + 36 r^2 s^3 - 15 s^4\\
 &+ 15 r s^4 + 3 s^5 - \alpha^2 +
 5 s  \alpha^2 - 15 r s  \alpha^2 - 10 s^2  \alpha^2+
 45 r s^2  \alpha^2 + 36 r^2 s^2  \alpha^2 + 10 s^3  \alpha^2 \\
 &- 45 r s^3  \alpha^2
 - 36 r^2 s^3  \alpha^2 - 5 s^4  \alpha^2 +
 15 r s^4  \alpha^2 + s^5  \alpha^2\big),
\end{split}
\end{equation*}
\begin{equation*}
\begin{split}
d_{2}(s):=&3 - 18 s + 33 r s + 45 s^2 - 132 r s^2 + 72 r^2 s^2 - 60 s^3 +
 198 r s^3 - 144 r^2 s^3 - 2160 r^3 s^3\\
 & + 45 s^4 - 132 r s^4 +
 72 r^2 s^4 - 18 s^5 + 33 r s^5 + 3 s^6 + \alpha^2 -
 6 s \alpha^2 + 21 r s \alpha^2 + 15 s^2 \alpha^2 \\
 &- 84 r s^2 \alpha^2 + 36 r^2 s^2\alpha^2- 20 s^3 \alpha^2 +
 126 r s^3 \alpha^2 - 72 r^2 s^3\alpha^2 -
 864 r^3 s^3\alpha^2+ 15 s^4 \alpha^2 \\
 &- 84 r s^4 \alpha^2 +
 36 r^2 s^4\alpha^2 - 6 s^5 \alpha^2 + 21 r s^5 \alpha^2 +
 s^6 \alpha^2.
\end{split}
\end{equation*}
else if $n$ is odd, we have $D(s):=\widehat{d}_{0}(s)\left[\widehat{d}_{1}(s)+\widehat{d}_{2}(s)\right]$, where
\begin{equation*}
\begin{split}
\widehat{d}_{0}(s):=\frac{\alpha}{4f^{2}g^{5}}, ~{\rm with}~f=1-2s-4rs+s^{2},~~~~~~~~~~~~~~~~~~~~~~~~~~~~~~~~~~~~~~~~~~~~~~~~
\end{split}
\end{equation*}
\begin{equation*}
\begin{split}
\widehat{d}_{1}(s):=&g\big(1 - 4 s + 56 r s + 5 s^2 - 112 r s^2 + 272 r^2 s^2 - 5 s^4 +
 112 r s^4 - 272 r^2 s^4 + 4 s^5 \\
 &- 56 r s^5 - s^6 -\alpha^2+
 4 s \alpha^2 - 24 r s \alpha^2 - 5 s^2 \alpha^2 +
 48 r s^2 \alpha^2 - 144 r^2 s^2 \alpha^2+ 5 s^4 \alpha^2\\
 &-48 r s^4\alpha^2+ 144 r^2 s^4\alpha^2 - 4 s^5\alpha^2+
 24 r s^5 \alpha^2 + s^6 \alpha^2\big),
\end{split}
\end{equation*}
\begin{equation*}
\begin{split}
\widehat{d}_{2}(s):=&-2 + 10 s - 64 r s - 18 s^2 + 192 r s^2 - 544 r^2 s^2 + 10 s^3 -
 128 r s^3 + 544 r^2 s^3 - 768 r^3 s^3\\
 & + 10 s^4 - 128 r s^4 +
 544 r^2 s^4 - 768 r^3 s^4
 - 18 s^5 + 192 r s^5 - 544 r^2 s^5 +
 10 s^6 - 64 r s^6\\
 &- 2 s^7+  \alpha^2  - 5 s  \alpha^2  +
 32 r s  \alpha^2 + 9 s^2  \alpha^2  - 96 r s^2  \alpha^2 +
 272 r^2 s^2  \alpha^2  - 5 s^3  \alpha^2  + 64 r s^3  \alpha^2  \\
 &- 272 r^2 s^3  \alpha^2 + 384 r^3 s^3  \alpha^2 - 5 s^4  \alpha^2  +
 64 r s^4  \alpha^2 - 272 r^2 s^4  \alpha^2  +
 384 r^3 s^4  \alpha^2  + 9 s^5  \alpha^2 \\
 & - 96 r s^5  \alpha^2  +
 272 r^2 s^5  \alpha^2  - 5 s^6  \alpha^2  + 32 r s^6  \alpha^2  +
 s^7  \alpha^2 .
\end{split}
\end{equation*}

\end{document}